\newcommand{\0}{\ensuremath{\textsf 0}}
\newcommand{\1}{\ensuremath{\textsf 1}}
\newcommand{\2}{\ensuremath{\textsf 2}}
\newcommand{\x}{\ensuremath{\textsf x}}
\newcommand{\Gm}{\ensuremath{\Box}}
\newcommand{\Fm}{\ensuremath{\Diamond}}
\newcommand{\U}{\ensuremath{\textbf{U}}}
\newcommand{\UU}{\ensuremath{\textbf{UU}}}
\newcommand{\X}{\ensuremath{\textbf{X}}}
\newcommand{\nfa}{\ensuremath{NFA}\xspace}
\newcommand{\dfa}{\ensuremath{DFA}\xspace}
\newcommand{\A}{\ensuremath{\mathcal{A}}\xspace}
\newcommand{\Lang}{\ensuremath{\mathcal{L}}\xspace}
\newcommand{\M}{\ensuremath{\mathcal{M}}~}
\newcommand{\W}{\ensuremath{\mathcal{W}}\xspace}
\newcommand{\bigo}[1] {\ensuremath{O(#1)\xspace}}
\newcommand{\bigomega}[1] {\ensuremath{\Omega(#1)\xspace}}
\newcommand{\focc}[1] {\ensuremath{\Xi(#1)\xspace}}
\newcommand{\init} {\ensuremath{\mbox{\textbf{init}}(D_1:\Theta_1~/~ D_2:\Theta_2)}}
\newcommand{\triggers} {\ensuremath{\mbox{\textbf{triggers}}(D_1:\Theta_1\leadsto D_2:\Theta_2/D_3:\Theta_3)}}
\newcommand{\imp} {\ensuremath{\mbox{\textbf{implies}}(D_1:\Theta_1\leadsto D_2:\Theta_2)}}
\newcommand{\follows} {\ensuremath{\mbox{\textbf{follows}}(D_1:\Theta_1\leadsto D_2:\Theta_2/D_3:\Theta_3)}}
\newcommand{\pref}[1] {\ensuremath{\mbox{\textbf{pref}}(#1)~}}
\newcommand{\anti}[1] {\ensuremath{\mbox{\textbf{anti}}(#1)}}
\newcommand{\lags}{\ensuremath{\textbf{lags}}}
\newcommand{\tracks} {\ensuremath{\textbf{tracks}}}
\newcommand{\sep}{\ensuremath{\textbf{sep}}}
\newcommand{\ubound}{\ensuremath{\textbf{ubound}}}
\newcommand{\wf} {\ensuremath{\textsf{Wf~}}}
\newcommand{\modelsv} {\ensuremath{\models_\nu}}
\newcommand {\And} {\ensuremath{\wedge}}
\newcommand {\Or} {\ensuremath{\vee}}
\newcommand {\union} {\ensuremath{\cup}}
\newcommand {\implies} {\ensuremath{\Rightarrow}}
\renewcommand {\iff} {\ensuremath{\Leftrightarrow}}
\newcommand {\qddc} {QDDC\xspace}
\newcommand {\ce} {Ce\xspace}
\newcommand {\sece} {SeCe\xspace}
\newcommand {\secenl} {SeCeNL\xspace}
\newcommand{\sececnt}{SeCe\xspace}
\newcommand{\sececntnl}{SeCeNL\xspace}
\newcommand {\ltl} {LTL\xspace}
\newcommand {\psl} {PSL-Sugar\xspace}
\newcommand {\mtl} {MTL\xspace}
\newcommand {\ang}[1] {\ensuremath{\langle#1\rangle}}
\newcommand {\sq}[1] {\ensuremath{[#1]}}
\newcommand {\dsq}[1] {\ensuremath{[[#1]]}}
\newcommand {\dcurly}[1] {\ensuremath{\{\{#1\}\}}}
\newcommand {\defeq} {\ensuremath{\stackrel{\mathrm{def}}{\equiv}}}
\newcommand {\len}[1] {\ensuremath{len(#1)}}
\newcommand {\intv}[1] {\ensuremath{Intv(#1)}}
\newcommand {\slen} {\ensuremath{slen}\xspace}
\newcommand {\scount} {\ensuremath{scount}\xspace}
\newcommand {\sdur} {\ensuremath{sdur}\xspace}
\newcommand {\pt} {\ensuremath{pt}~}
\newcommand {\ext} {\ensuremath{ext}~}
\newcommand {\true} {\ensuremath{true}~}
\newcommand {\dom}[1] {\ensuremath{dom(#1)}}
\newcommand {\nat} {\ensuremath{\mathbb{N}}}
\newcommand {\eqv}[1] {\ensuremath{[[#1]]}}
\newcommand{\oomit}[1]{}
\begin{document}
\title{Formalizing Timing Diagram Requirements in Discrete Duration Calulus}
\titlerunning{Hamiltonian Mechanics}  
\author{Raj Mohan Matteplackel\inst{1} \and Paritosh K.~Pandya\inst{1} \and Amol Wakankar\inst{2}}
\institute{Tata Institute of Fundamental Research, Mumbai 400005, India.\\
\email{\{raj.matteplackel,pandya\}@tifr.res.in}
\and
Bhabha Atomic Research Centre, Mumbai, India.\\
\email{amolk@barc.gov.in}}
\maketitle              

\begin{abstract}
Several temporal logics have been proposed to formalise \emph{timing diagram} requirements
over hardware and embedded controllers. These include LTL \cite{CF05}, 
discrete time MTL \cite{AH93} and the recent industry standard 
PSL \cite{EF16}. However, succintness and visual structure of a timing diagram are not adequately captured by their formulae 
\cite{CF05}.
Interval temporal logic QDDC is a highly succint and visual notation for specifying patterns of behaviours
\cite{Pan00}.  

In this paper, we propose a practically useful notation called \sececntnl 
which enhances negation free fragment of \qddc with features of \emph{nominals} and \emph{limited liveness}. 
We show that timing diagrams  can be naturally (compositionally)
and succintly  formalized in  \sececntnl as compared with \psl and \mtl. We give a linear time translation from timing diagrams to
\sececntnl. As our second main result, we  propose a linear time 
translation of \sececntnl into \qddc. 
This allows QDDC tools such as DCVALID \cite{Pan00,Pan01} and DCSynth to be used
for checking consistency of timing diagram requirements as well as for automatic synthesis of property monitors and controllers.
We give examples of a minepump controller and a bus arbiter to illustrate our tools. Giving a theoretical analysis, we show that for the proposed \sececntnl, the satisfiability and model checking have elementary complexity as compared to the non-elementary 
complexity for the full logic \qddc.


\end{abstract}

\section{Introduction}
\label{section:intro}
A \emph{timing diagram} is a collection of binary signals 
and a set of timing constraints on them. 
It is a widely used visual formalism in the realm of digital hardware design,
communication protocol specification and
embedded controller specification. 
The advantages of timing diagrams in hardware design are twofold, 
one, since designers can visualize waveforms of signals 
they are easy to comprehend and two, they are very convenient 
for specifying ordering and timing constraints between events 
(see figures Fig.~\ref{fig:intro} and Fig.~\ref{fig:lags} below).

There have been numerous attempts at formalizing timing diagram constraints
in the framework of temporal logics such as the
\emph{timing diagram logic} \cite{Fis99}, with
\emph{LTL formulas} \cite{CF05}, and as \emph{synchronous regular timing diagrams} 
\cite{AEKN00}. Moreover, there are  industry standard 
property specification languages 
such as  \emph{PSL/Sugar} and  \emph{OVA} 
for associating temporal assertions to hardware designs \cite{EF16}. 
The main motivation for these attempts was to 
exploit automatic verification techniques 
that these formalisms support for validation and automatic circuit synthesis.
However, commenting on their success, Fisler et.~al.~state that 
the less than satisfactory adoption of formal methods 
in timing diagram domain can be partly attributed 
to the gulf that exists between 
graphical timing diagrams and textual temporal logic -- 
expressing various timing dependencies that can exist among signals 
that can be illustrated so naturally in timing diagrams 
is rather tedious in temporal logics \cite{CF05}. 
As a result, hardware designers use 
timing diagrams informally without any well defined 
semantics which make them unamenable to 
automatic design verification techniques. 

In this paper, we take a fresh look at formalizing 
timing diagram requirements with emphasis 
on the following three features of the formalism that we propose here.

Firstly, we propose the use of an \emph{interval temporal logic} \qddc 
to specify patterns of behaviours.  
\qddc is a highly succinct and visual notation for specifying regular
patterns of behaviours \cite{Pan00,Pan01,KP05}. 
We identify a quantifier and negation-free subset \sececnt  of \qddc which is
sufficient for formalizing timing diagram patterns.  
It includes generalized regular expression like syntax with counting constructs.
Constraints imposed by timing diagrams are straightforwardly 
and compactly stated in this logic. 
For example, the timing diagram
in Fig.~\ref{fig:intro} stating that $P$ transits from 0 to 1 somewhere in interval
$u$ to $u+3$ cycles is captured by the \sececnt formula 
\verb|[|$\neg$ \verb|P]^<u>^(slen=3| $\And$\verb| [|$\neg$\verb|P]^[[P]])^[[P]]|.
\emph{The main advantage of \sececnt\/ is that it has elementary satisfiability as compared to the non-elementary
satisfiability of general \qddc}.
\begin{figure}[!h]
\begin{center}
\includegraphics[height=1cm, keepaspectratio]{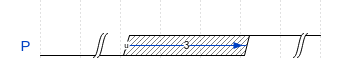}
\end{center}
\caption{Timing diagram with a marked position $u$ and a timing constraint.}
\label{fig:intro}
\end{figure}

Secondly, it is very typical for timing diagrams to have partial ordering 
and synchronization constraints between distinct events. 
Emphasizing this aspect, 
formalisms such as two dimensional regular expressions \cite{Fis07} 
have been proposed for timing diagrams. 
We find that synchronization in timing diagram may even extend 
across different patterns of limited liveness properties. 
In order to handle such synchronization, 
we extend our logic \sececnt\/  with \emph{nominals} from hybrid temporal logics \cite{FRS03}. 
Nominals are temporal variables which ``freeze'' the positions of 
occurrences of events. They naturally allow synchronization across formulae.

Thirdly, we enhance the timing diagram  specifications (as well as logic \sececnt)
with {\em limited liveness operators}. While
timing diagrams visually specify patterns  of occurrence of signals, they do not make precise the modalities
of occurrences of such patterns. 
We explicitly introduce modalities such as 
\textit{a}) initially, a specified pattern must occur, or
that \textit{b}) 
every occurrence of \emph{pattern1} 
is necessarily and immediately followed by an occurrence of \emph{pattern2}, 
or that \textit{c}) occurrence of a specified pattern is 
forbidden anywhere within a behaviour.
In this, we are inspired by Allen's Interval Algebra relations \cite{All83} as well as the LSC operators of Harel for 
message sequence charts \cite{DH01}. We confine ourselves to 
{\em limited liveness properties} where good things are achieved within
specified bounds.
For example, in specifying a modulo 6 counter, we can say that 
the counter will stabilize  before completion of first 15 cycles. 
Astute readers will notice that, technically, 
our limited liveness operators only give rise 
to ``safety'' properties (in the sense of Alpern and 
Schneider \cite{AS87}). However, from a designer's perspective 
they do achieve the practical goal of forcing good things to happen. 

Putting all these together, we define a logic \sececntnl which includes negation-free \qddc together
with   limited liveness operators  as well as nominals. 
The formal syntax and semantics of \sececntnl formulas is given in \S\ref{subsection:secenl}. 
We claim that \sececntnl provides a natural and convenient formalism for encoding timing diagram requirements. 
{\em Substantiating this, we formulate a translation of timing diagrams into \secenl formulae in \S\ref{section:td}}. 
The translation is succinct, in fact, linear time computable in the size of the timing diagram.
(A textual syntax is used for timing diagrams. The textual syntax of timing  diagrams used is inspired by the tool 
WaveDrom \cite{CP16},  which is also used for graphical rendering of our timing diagram specifications.)
Moreover, the translation is compositional, i.e. it translates each element of the timing diagram as one small formula and 
overall specification is just the conjunction of such constraints. Hence, the translation preserves the structure of the diagram.

With several examples of timing diagrams, we compare its \sececntnl formula with the formula in logics such as \psl and \mtl.
Logic \psl is amongst the most expressive notations for requirements. Logic \psl is syntactically a superset of \mtl and \ltl. It extends \ltl with SERE (regular expressions with intersection)
which are similar to our \sececnt. In spite of this, we a show natural examples where \sececntnl formula is at least one exponent more succinct as compared to \psl. 

As the second main contribution of this paper,  we consider formal verification and controller synthesis from \sececntnl specifications.
In \S\ref{subsection:td2secenl}, we formulate a reduction from a \sececntnl formula to an equivalent \qddc formula. 
This allows \qddc tools to be used for \sececntnl. It may be noted that, though expressively no more powerful than \qddc, 
logic \sececntnl considerably more efficient for satisfiability and model checking. We show that these problems have elementary complexity
as compared with full \qddc which exhibits non-elementary complexity. Also, the presence of limited liveness and nominals makes it more
convenient as compared to \qddc for practical use.

By implementing the above reductions, we have constructed a Python based translator  which converts a requirement 
consisting of a boolean combination  of timing diagram 
specifications (augmented with limited liveness) and 
\sececntnl formulae into an equivalent \qddc formula. 
We can analyze the resulting formula  using the \qddc tools DCVALID \cite{Pan00,Pan01} as well as DCSynthG 
for model checking and controller synthesis, respectively  (see Fig.~\ref{fig:tool-chain} for the tool chain). 
We illustrate the use of our tools by the case studies of 
a synchronous bus arbiter and a minepump controller in \S\ref{section:casestudy}. 
Readers may note that we specify rather rich 
quantitative requirements not commonly considered, 
and our tools are able to automatically synthesize monitors and controllers for 
such specifications.


\section{Logic \qddc}
\label{section:qddc}
Let $\Sigma$ be a finite non empty set of propositional variables. 
A \emph{word} $\sigma$ over $\Sigma$ is a finite sequence of the form 
$P_0\cdots P_n$ where $P_i\subseteq\Sigma$ for each $i\in\{0,\ldots,n\}$. 
Let $\len{\sigma}=n+1$, 
$\dom{\sigma}=\{0,\ldots,n\}$ and $\forall i\in\dom{\sigma}:\sigma(i)=P_i$.  

The syntax of a \emph{propositional formula} over $\Sigma$ is given by:
\[
\varphi := \0\ |\ \1\ |\ p\in\Sigma\ |\ \varphi\And\varphi\ |\ \varphi\Or\varphi\ |\ \neg\varphi, 
\]
and operators such as $\implies$ and $\iff$ are defined as usual. 
Let $\Omega_\Sigma$ be the set of all propositional formulas over $\Sigma$. 

Let $\sigma=P_0\cdots P_n$ be a word and $\varphi\in\Omega_\Sigma$. 
Then, for an $i\in\dom{\sigma}$ 
the satisfaction relation $\sigma,i\models\varphi$ is defined inductively as expected:
$\sigma, i \models \1$; $\sigma, i \models p$ iff $p\in\sigma(i)$; 
$\sigma, i \models \neg p$ iff $\sigma,i\not\models p$, 
and the satisfaction relation for the rest of the 
boolean combinations defined in a natural way.

The syntax of a \qddc formula over $\Sigma$ is given by: 
\[
\begin{array}{lc}
D:= &\ang{\varphi}\ |\ \sq{\varphi}\ |\ \dsq{\varphi}\ |\ \dcurly{\varphi}\ |\ D\ \verb|^|\ D\ |\ 
\neg D\ |\ D\Or D\ |\ D\And D\ |\ D^*\ |\ \vspace{1mm}\\
&\exists p.\ D\ |\ \forall p.\ D\ |\ 
slen \bowtie c\ |\ scount\ \varphi \bowtie c\ |\ sdur\ \varphi \bowtie c, 
\end{array} 
\]
where $\varphi\in\Omega_\Sigma$, $p\in\Sigma$, 
$c\in\nat$ and $\bowtie\in\{<,\leq,=,\geq,>\}$. 

An \emph{interval} over a word $\sigma$ is of the form $[b,e]$ 
where $b,e\in\dom{\sigma}$ and $b\leq e$. 
An interval $[b_1,e_1]$ is a sub interval of $[b,e]$ 
if $b\leq b_1$ and $e_1\leq e$. 
Let $\intv{\sigma}$ be the set of all intervals over $\sigma$.

Let $\sigma$ be a word over $\Sigma$ and let $[b,e]\in\intv{\sigma}$ be an interval. 
Then the satisfaction relation of a \qddc formula $D$ over 
$\Sigma$, written $\sigma,[b,e]\models D$, is defined inductively as follows:
\[
\begin{array}{lcl}
\sigma, [b,e]\models\ang{\varphi} & \mathrm{\ iff \ } & \sigma,b\models \varphi,\\
\sigma, [b,e]\models\sq{\varphi} & \mathrm{\ iff \ } & \forall b\leq i<e:\sigma,i\models \varphi,\\
\sigma, [b,e]\models\dsq{\varphi} & \mathrm{\ iff \ } & \forall b\leq i\leq e:\sigma,i\models \varphi,\\
\sigma, [b,e]\models\dcurly{\varphi} & \mathrm{\ iff \ } & e=b+1 \mbox{ and }\sigma,b\models \varphi,\\
\sigma, [b,e]\models\neg D & \mathrm{\ iff \ } & \sigma,[b,e]\not\models D,\\
\sigma, [b,e]\models D_1\Or D_2 & \mathrm{\ iff \ } & \sigma, [b,e]\models D_1\mbox{ or }\sigma,[b,e]\models D_2,\\
\sigma, [b,e]\models D_1\And D_2 & \mathrm{\ iff \ } & \sigma, [b,e]\models D_1\mbox{ and }\sigma,[b,e]\models D_2,\\
\sigma, [b,e]\models D_1\verb|^| D_2 & \mathrm{\ iff \ } & \exists b\leq i\leq e:\sigma, [b,i]\models D_1\mbox{ and }\sigma,[i,e]\models D_2.\\
\end{array}
\]
We call word $\sigma'$ a $p$-variant, $p\in\Sigma$, of a word $\sigma$ 
if $\forall i\in\dom{\sigma},\forall q\neq p:\sigma'(i)(q)=\sigma(i)(q)$. 
Then $\sigma,[b,e]\models\exists p.~D\iff\sigma',[b,e]\models D$ for some 
$p$-variant $\sigma'$ of $\sigma$ and, 
$\sigma,[b,e]\models\forall p.~D\iff\sigma,[b,e]\not\models\exists p.~\neg D$. 
We define $\sigma\models D$ iff $\sigma,[0,\len{\sigma}]\models D$.

\begin{example}
Let $\Sigma=\{p,q\}$ and let 
$\sigma=P_0\cdots P_7$ be such that $\forall 0\leq i<7:P_i=\{p\}$ and $P_7=\{q\}$. 
Then $\sigma, [0,7]\models\sq{p}$ but not $\sigma, [0,7]\models\dsq{p}$ 
as $p\not\in P_7$.   
\end{example}

\begin{example}
Let $\Sigma=\{p,q,r\}$ and let 
$\sigma=P_0\cdots P_{10}$ be such that $\forall 0\leq i<4:P_i=\{p\}$, 
$\forall 4\leq i<8:P_i=\{p,q,r\}$ and $\forall 8\leq i\leq 10:P_i=\{q,r\}$.  
Then 
\[
\sigma, [0,10]\models \sq{p}\verb|^|\dsq{\neg p\And r}
\]
because for $i\in\{8,9,10\}$ 
the condition $\exists 0\leq i\leq 10:\sigma, [0,i]\models\sq{p}\mbox{ and }\sigma,[i,10]\models\dsq{\neg p\And r}$ 
is met. But $\sigma, [0,7]\not\models\sq{p}\verb|^|\dsq{\neg p\And r}$ as 
$\neg\exists 0\leq i\leq 7:\sigma, [0,i]\models\sq{p}\mbox{ and }\sigma,[i,7]\models\dsq{\neg p\And r}$. 
\end{example}

Entities \slen, \scount, and \sdur are called \emph{terms} in \qddc. 
The term \slen gives the length of the interval in which it is 
measured, $\scount\ \varphi$ where $\varphi\in\Omega_\Sigma$, counts 
the number of positions including the last point 
in the interval under consideration where $\varphi$ holds, and    
$\sdur\ \varphi$ gives the number of positions excluding the last point 
in the interval where $\varphi$ holds. 
Formally, for $\varphi\in\Omega_\Sigma$ we have 
$\slen(\sigma, [b,e])=e-b$, 
$\scount(\sigma,\varphi,[b,e])=\sum_{i=b}^{i=e}\left\{\begin{array}{ll}
					1,&\mbox{if }\sigma,i\models\varphi,\\
					0,&\mbox{otherwise.}
					\end{array}\right\}$ and 
$\sdur(\sigma,\varphi,[b,e])=\sum_{i=b}^{i=e-1}\left\{\begin{array}{ll}
					1,&\mbox{if }\sigma,i\models\varphi,\\
					0,&\mbox{otherwise.}
					\end{array}\right\}$
In addition we also use the following derived constructs: 
$\sigma, [b,e]\models\pt$ iff $b=e$; $\sigma, [b,e]\models\ext$ iff $b<e$; 
$\sigma, [b,e]\models \Fm D$ iff $\true\verb|^|D\verb|^|\true$ and 
$\sigma, [b,e]\models \Gm D$ iff $\sigma, [b,e]\not\models\Fm\neg D$. 

A \emph{formula automaton} for a \qddc formula $D$ is a 
\emph{deterministic finite state automaton} which accepts precisely language 
$L=\{\sigma\ |\ \sigma\models D\}$.

\begin{theorem}
\label{theorem:formula-automaton}\cite{Pan01}
For every \qddc formula $D$ over $\Sigma$ we can construct a \dfa $\A(D)$ 
for $D$ such $\Lang(D)=\Lang(\A(D))$. 
The size of $\A(D)$ is non elementary in the size of $D$ in the worst case.
\end{theorem}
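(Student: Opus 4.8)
The plan is to prove the two halves of the statement — the existence of the formula automaton and the non-elementary worst-case bound — largely independently. For the construction I would proceed by structural induction on $D$, building for each subformula a finite automaton over the alphabet $2^\Sigma$ and appealing to closure of the regular languages under the relevant operations. The subtlety is that satisfaction in \qddc is defined relative to an \emph{interval} $[b,e]$, not relative to a whole word, so the induction hypothesis must be strengthened. I would construct, for each $D$, an \nfa $\A(D)$ such that a run on the factor $\sigma(b)\cdots\sigma(e)$ leads from a designated entry state to a designated accepting state exactly when $\sigma,[b,e]\models D$. The top-level claim $\Lang(D)=\Lang(\A(D))$ then follows by instantiating the interval to $[0,\len{\sigma}]$ and determinizing once at the end.

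First I would dispatch the base cases. The atomic formulas $\ang{\varphi}$, $\sq{\varphi}$, $\dsq{\varphi}$ and $\dcurly{\varphi}$ test a propositional condition $\varphi\in\Omega_\Sigma$ at the endpoints or along the interior of the interval; since each letter of $2^\Sigma$ either does or does not model $\varphi$, these are recognised by automata of constant size. The counting terms $slen\bowtie c$, $scount\ \varphi\bowtie c$ and $sdur\ \varphi\bowtie c$ are handled by a counter that saturates at $c+1$, giving \bigo{c} states. The boolean cases are the standard product ($\And$, $\Or$) and complementation ($\neg$, after determinizing) constructions. The chop $D_1\frown D_2$ corresponds to a concatenation in which the split position $i$ is \emph{shared} by the two sub-intervals $[b,i]$ and $[i,e]$: I would glue $\A(D_1)$ to $\A(D_2)$ with $\epsilon$-moves from the accepting states of the former to the entry state of the latter, taking care that the shared letter $\sigma(i)$ is counted consistently with the right-exclusive semantics of $\sq{\cdot}$ and the left-anchored semantics of $\ang{\cdot}$. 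Kleene star $D^*$ is the corresponding iteration. Finally, $\exists p.\,D$ is projection: erasing the $p$-component of every letter turns $\A(D)$ into an \nfa over $2^{\Sigma\setminus\{p\}}$, which must be determinized, and $\forall p.\,D$ is obtained as $\neg\exists p.\,\neg D$. Each quantifier therefore forces one determinization, i.e. one exponential, and nesting $k$ of them yields a tower of exponentials of height \bigo{k}, establishing the non-elementary upper bound.

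I expect two steps to carry the real weight. The first is getting the chop construction exactly right: the shared boundary point and the mismatch between the closed interval $[b,e]$ and the factor actually read make the $\epsilon$-gluing delicate, and the strengthened interval-indexed invariant must be maintained through chop, star and quantification simultaneously for the induction to close. The second, and genuinely hard, part is the matching non-elementary \emph{lower} bound. For this I would not argue about my particular construction — a better algorithm might exist — but instead exhibit a family of formulas $D_k$ of size \bigo{k} whose minimal \dfa provably has non-elementary size. The natural route is to reduce from the known non-elementary succinctness of weak monadic second-order logic over words (WS1S), whose propositional quantifiers correspond directly to the $\exists p$ of \qddc and whose successor structure is expressible using chop and the atomic modalities; alternatively one encodes the classical $k$-fold exponential tower counters, forcing any recogniser to distinguish a non-elementary number of prefixes by a Myhill--Nerode argument. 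Either way the alternation of projection and complementation inherent in nested quantification is what drives the blow-up, matching the upper bound obtained above.
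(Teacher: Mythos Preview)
The paper does not prove this theorem: it is stated with a citation to \cite{Pan01} and treated as a known background result, so there is no in-paper proof to compare your proposal against. What you have sketched is the standard construction (and is essentially the one underlying the DCVALID tool the paper invokes in \S\ref{subsection:dcvalid}): structural induction building an interval-indexed \nfa, with product for the Booleans, overlapping concatenation for chop, saturation counters for the measurement terms, and projection-then-determinize for $\exists p$. Your identification of the two load-bearing steps --- the shared-endpoint bookkeeping in chop, and the non-elementary \emph{lower} bound via a WS1S reduction driven by quantifier alternation --- is accurate and matches the folklore argument that \cite{Pan01} records.

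One small wrinkle worth tightening if you develop this further: in the chop case the gluing is not literally $\epsilon$-concatenation of \nfa's over $2^\Sigma$, because the split letter $\sigma(i)$ is read by \emph{both} halves. The clean way is to work with \nfa's whose runs consume $e-b+1$ letters for the interval $[b,e]$ and then realise chop as a one-letter-overlap product (equivalently, fuse the last transition of $\A(D_1)$ with the first transition of $\A(D_2)$). Your prose acknowledges the issue but the ``$\epsilon$-moves from accepting states'' description would, taken literally, drop the shared letter. This is a presentation point, not a gap in the strategy.
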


\subsection{Chop expressions: \ce and \sece}
\label{subsection:sece}
\begin{definition}
The logic \emph{Semi extended Chop expressions} (\sece) is a syntactic subset of \qddc in which 
the operators $\exists p.~D$, $\forall p.~D$ and negation are not allowed.
The logic \emph{Chop expressions} (\ce) is a sublogic of 
\sece in which conjuction is not allowed.
\end{definition}

\begin{lemma}
\label{lemma:ce-size}
For any chop expression $D$ of size $n$ we can effectively construct a language equivalent 
\dfa \A of size $\bigomega{\ensuremath{2^{2^n}}}$. 
\end{lemma}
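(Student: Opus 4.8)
The plan is to prove the bound by exhibiting an explicit family of chop expressions and showing that \emph{every} language-equivalent deterministic automaton for it must have doubly-exponentially many states; the size bound then holds for any \dfa we construct, since the minimal \dfa is a lower envelope for all such automata. Concretely, for each $n$ I would fix a chop expression $D_n$ of size $O(n)$ and argue that the minimal \dfa for $\Lang(D_n)$ already has \bigomega{2^{2^n}} states, so that any constructed \dfa \A with $\Lang(\A)=\Lang(D_n)$ has at least this size.

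For the witness family I would use a subset-encoding language over a small alphabet: a word lists a sequence of $n$-bit blocks, delimited by a separator, and then, after a second separator, a single $n$-bit block $u$; the word is accepted exactly when $u$ coincides with one of the listed blocks. The listed blocks range over an arbitrary subset $S\subseteq\{0,1\}^n$, and two prefixes that record different subsets $S\neq S'$ are Myhill--Nerode inequivalent, since a block lying in the symmetric difference of $S$ and $S'$ is accepted as the final block after one prefix but not the other. As there are $2^{2^n}$ subsets of $\{0,1\}^n$, this produces $2^{2^n}$ pairwise distinguishable classes, yielding the stated lower bound on the number of states.

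The steps, in order, are: (i) define $D_n$ as a chop expression of size $O(n)$ whose language is exactly the subset-encoding language; (ii) produce the fooling set of $2^{2^n}$ prefixes, one for each subset $S$; (iii) for every unequal pair exhibit a distinguishing continuation (a final block in the symmetric difference) and invoke the Myhill--Nerode theorem to conclude that the minimal \dfa has \bigomega{2^{2^n}} states; and (iv) observe that any \dfa we construct for $\Lang(D_n)$ is at least as large as the minimal one, which is the claim.

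The hard part is step (i): encoding both the ``list of blocks'' and the ``membership test'' while remaining inside conjunction-free chop expressions. The chop operator together with $D^{*}$ supplies the list structure and the ``guess a position and match there'' pattern ($\true\,\verb|^|\,(\,\cdots\,)\,\verb|^|\,\true$) needed for the membership test; but delimiting each block to length exactly $n$ and aligning it bit-for-bit against $u$ is where the numeric constructs must do the work, since the bit-by-bit comparison is naturally a conjunction and conjunction is unavailable in \ce. I would therefore push the comparison into the counting terms, using $\slen=n$ to fix block boundaries and constraints of the form $\scount\ \varphi\bowtie c$ and $\sdur\ \varphi\bowtie c$ with constants written in binary, so that a size-$O(n)$ formula pins down the first exponential ($2^{n}$ possible blocks) while the chop/star iteration forces the second ($2^{2^n}$ possible subsets). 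Verifying that this conjunction-free encoding captures the intended language exactly -- and in particular that both exponentials are genuinely forced rather than collapsed by the absence of conjunction -- is the crux of the argument; the fooling-set bookkeeping of steps (ii)--(iv) is then routine.
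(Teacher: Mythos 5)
Your proposal proves a different statement from the one the paper actually establishes and uses. Despite the $\bigomega{2^{2^n}}$ in the lemma's wording (which is evidently a notational slip for an upper bound), the paper's proof is an effective \emph{construction}: from an arbitrary chop expression $D$ of size $n$ one first builds a language-equivalent \nfa of at most exponential size (the exponential coming from the binary-encoded constants in $\slen$, $\scount$, $\sdur$; the paper cites \cite{BP12} for this), and then determinizes, yielding a \dfa of size at most $2^{2^n}$. That upper-bound reading is forced by how the lemma is used: the corollary for \sece (product of the \dfa's for the conjuncts) and the paper's headline claim that \sececntnl has \emph{elementary} satisfiability both rest on it. A fooling-set lower bound for one witness family, however correct, constructs nothing for a general $D$ and cannot support either downstream use, so the two arguments are not interchangeable: yours shows the doubly-exponential cost is unavoidable in the worst case, the paper's shows it is never exceeded.

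Even taken purely as a lower bound, your step (i) -- the part you yourself flag as the crux -- has a real gap for the witness you chose. Testing whether the final block $u$ equals \emph{some} listed block requires bit-for-bit equality of two $n$-bit blocks separated by unboundedly many letters, i.e., a conjunction of $n$ positional constraints sharing one guessed alignment. A chop sequence traverses the word left to right and cannot revisit positions, and $\scount\ \varphi\bowtie c$ / $\sdur\ \varphi\bowtie c$ only constrain aggregate counts over an interval, so they give no evident way to enforce $n$ simultaneous aligned bit-equalities without conjunction; the natural conjunction-free encoding is a disjunction over all $2^n$ block values, of exponential size, which destroys the bound. If you do want a lower-bound witness, a far simpler family suffices and avoids the problem entirely: with $c=2^n$ written in binary, the chop expression \true\verb|^|\dcurly{p}\verb|^|$(\slen=c)$ has size $O(n)$ and asserts that $p$ holds at distance exactly $c+1$ from the right end; by Myhill--Nerode the minimal \dfa must remember the $p$-values of the last $c+1$ letters, giving $2^{2^{\Omega(n)}}$ states with no subset encoding and no block comparison at all.
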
  

\begin{proof}
We observe that for any chop expression $D$ we can construct 
a language equivalent \nfa which is at most exponential in size of $D$ 
including the constants appearing in it 
(for a detailed proof see \cite{BP12} wherein a similar result has been proved). 
But this implies there exists a \dfa of size $2^{2^n}$ 
which accepts exactly the set of 
words $\sigma$ such that $\sigma\models D$. 
\end{proof}

\begin{corollary}
For any \sece $D$ of size $n$ we can effectively construct a language equivalent 
\dfa \A of size $\bigomega{\ensuremath{2^{2^{2^n}}}}$. 
\end{corollary}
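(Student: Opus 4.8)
The plan is to bootstrap from Lemma~\ref{lemma:ce-size}, observing that \sece differs from \ce only by the presence of conjunction, and that conjunction is nothing but language intersection, which on automata is the product construction. So the whole task reduces to re-running the construction behind Lemma~\ref{lemma:ce-size} while adding a single rule for $\And$, and then checking that this rule does not inflate the intermediate \nfa beyond one exponential.

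Concretely, I would fix a \sece formula $D$ of size $n$ and build one \nfa for it by induction on its syntax tree. The leaves $\ang{\varphi}$, $\sq{\varphi}$, $\dsq{\varphi}$, $\dcurly{\varphi}$ and the counting terms $slen\bowtie c$, $scount\,\varphi\bowtie c$, $sdur\,\varphi\bowtie c$ each yield an \nfa of size $\bigo{2^{n}}$, since a binary-coded bound $c\le 2^{n}$ needs at most $\bigo{2^{n}}$ counting states. For the operators $\Or$, chop and Kleene-star I would reuse exactly the nondeterministic constructions already used for \ce in Lemma~\ref{lemma:ce-size}; these are additive in the sizes of their operand automata. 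The only new rule is for $D_1\And D_2$, handled by the product automaton recognising $\Lang(D_1)\intersection\Lang(D_2)$, which multiplies the two operand sizes. Determinising the resulting \nfa once, by the subset construction, then gives a \dfa \A with $\Lang(\A)=\{\sigma\mid\sigma\models D\}$.

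The step I expect to be the real obstacle is bounding the product rule under nesting: since intersection is multiplicative, one must rule out that conjunctions stacked underneath chop and Kleene-star iterate the multiplication and drive the \nfa to double-exponential size. I would discharge this by maintaining the invariant that, for every subformula $D'$, the \nfa built for $D'$ has size $2^{O(|D'|)}$: a product node replaces $2^{a}\cdot 2^{b}$ by $2^{a+b}$ with $a+b=\bigo{|D'|}$, because the two operand sizes are charged to disjoint subtrees of $D'$, while the additive operators raise an exponent by only a constant per node. With this invariant the full \nfa has size $\bigo{2^{n}}$, so one determinisation produces a \dfa of size $\bigo{2^{2^{n}}}$; in particular it lies within the claimed $2^{2^{2^{n}}}$ bound, the extra exponent being exactly the slack one pays if, instead, each conjunct is first determinised separately before the products are taken.
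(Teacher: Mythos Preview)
Your argument is correct, and in fact sharper than what the paper needs. The paper's own proof is a one-line appeal to Lemma~\ref{lemma:ce-size} together with the observation that the product of \dfa{}s can be at most exponential in the individual sizes; this works at the \dfa\ level and simply budgets one further exponential for conjunction, arriving at $2^{2^{2^{n}}}$. You instead stay at the \nfa\ level throughout, handle $\And$ by the \nfa\ product, maintain the invariant $|\mathit{NFA}(D')|\le 2^{O(|D'|)}$ by charging multiplicative growth to disjoint subtrees, and determinise once at the end. That yields $2^{2^{O(n)}}$, which is strictly inside the stated bound; your remark that the extra exponent is the slack from determinising before taking products is exactly the difference between the two routes.

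What your approach buys is a rigorous treatment of the case the paper's sketch glosses over, namely conjunctions nested under chop and Kleene star: the paper's phrasing reads most naturally for a top-level conjunction of \ce\ formulas, whereas your invariant handles arbitrary nesting uniformly. What the paper's approach buys is brevity---it reuses Lemma~\ref{lemma:ce-size} as a black box rather than re-opening the inductive \nfa\ construction. Both are valid; yours is the tighter and more careful of the two.
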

\begin{proof}
Proof follows from the definition of \sece, lemma~\ref{lemma:ce-size} 
and from the fact that 
the size of the product of \dfa's can be atmost exponential in the size 
of individual \dfa's.
\end{proof}

\subsection{DCVALID and DCSynthG}
\label{subsection:dcvalid}
%

The reduction from a \qddc formula to its formula automaton 
has been implemented into the tool \emph{DCVALID} \cite{Pan00,Pan01}. 
The formula automaton it generates is total, deterministic 
and minimal automaton for the formula. DCVALID can also translate 
the formula automaton into Lustre/SCADE, Esterel, SMV and Verilog 
\emph{observer module}. By connecting this observer module to run synchronously with
a system we can reduce model checking of QDDC property to reachability checking
in observer augmented system. See \cite{Pan00,Pan01} for details.
A further use of formula automata can be seen in 
the tool called \emph{DCSynthG} which synthesizes synchronous 
dataflow controller in SCADE/NuSMV/Verilog from \qddc specification. 


\subsection{Logic \secenl: Syntax and Semantics}
\label{subsection:secenl}

We can now introduce our logic \secenl which builds upon \sece 
by augmenting it with 
\emph{nominals} and \emph{limited liveness operators}. 
\paragraph{Syntax}:
The syntax of \secenl atomic formula is as follows. 
Let $D$, $D_1$, $D_2$ and $D_3$ range over \sece formulae
and let $\Theta$, $\Theta_1$, $\Theta_2$ and $\Theta_3$ range over subset of 
propositional variables occurring in \sece formula. 
The notation $D:\Theta$, called a \emph{nominated formula}, denotes that $\Theta$ 
is the set of variables used as nominals in the formula $D$.
\[
\begin{array}{l}
   \init ~\mid~ \anti{D:\Theta} ~\mid~ \pref{D:\Theta} ~\mid\\ 
    \imp ~\mid~ \follows ~~\mid \\
    \triggers
\end{array} 
\]
{\em An \secenl formula is a boolean combination of atomic \secenl formulae of the form above.}
As a convention, $D:\{ \}$ is abbreviated as $D$ when the set of nominals $\Theta$ is empty.


\subsubsection{Limited Liveness Operators}: 
Given an word $\sigma$ and a position $i \in\dom{\sigma}$, 
we state that $\sigma,i \models D$ iff $\sigma[0:i] \models D$.
Thus, the interpretation is that the past of the position $i$ in execution satisfies $D$.

For a \sece formula $D$ we let $\focc{D}=D\And\neg (D\verb|^|ext)$, 
which says that if $\sigma,[b,e]\models\focc{D}$ then 
$\sigma,[b,e]\models D$ and 
there exists no proper prefix interval $[b,e_1]$, 
(i.~e.~$[b,e_1]\in\intv{\sigma}$ and $b \leq e_1 < e$) such that 
$\sigma,[b,e_1]\models D$. 
We say $\sigma'\leq_{prefix}\sigma$ if $\sigma'$ is a prefix of $\sigma$, and 
$\sigma' <_{prefix} \sigma$ if $\sigma'$ is a proper prefix of $\sigma$. 

We first explain the semantics of {\em limited liveness operators} 
assuming that no nominals are used in the specification, 
i.e.~$\Theta$, $\Theta_1$, $\Theta_2$ and $\Theta_3$ are all empty. 
A set $S\subseteq\Sigma^*$ is \emph{prefix closed} if $\sigma\in S$ 
then $\forall\sigma':\sigma'\leq_{prefix}\sigma\implies\sigma'\in S$.
We observe that each atomic liveness formula 
denotes a prefix closed subset of $(2^\Sigma)^+$.
\begin{itemize}
 \item $L(\pref{D})=\{\sigma\ |\ \forall\sigma'\leq_{prefix}\sigma:\sigma'\models D\}.$
 Operator $\pref{D}$ denotes that $D$  holds invariantly throughout the execution. 
 \item $L(\mbox{\textbf{init}}(D_1/ D_2))=
         \{\sigma\ |\ \forall j:\sigma,[0,j]\models D_2\implies\exists k\leq j:\sigma,[0,k]\models D_1\}.$
 Operator $\mbox{\textbf{init}}(D_1/ D_2)$ basically 
states that if $j$ is the first position which satisfies 
 $D_2$ in the execution then there exists an $i  \leq j$ such that $i$ satisfies $D_1$. Thus, initially $D_1$ holds before $D_2$
 unless the execution (is too short and hence) does not satisfy $D_2$ anywhere.
 \item $L(\mbox{\textbf{anti}}(D))=\{\sigma\ |\ \forall i,j:\sigma,[i,j]\not\models D\}$. 
Operator $\mbox{\textbf{anti}}(D)$ states that there is no observation sub interval of the execution which satisfies $D$.
 \item $L(\mbox{\textbf{implies}}(D_1\leadsto D_2))=
         \{\sigma\ |\ \forall i,j:(\sigma,[i,j]\models D_1\implies\sigma,[i,j]\models D_2)\}$. 
Operator $\mbox{\textbf{implies}}(D_1\leadsto D_2)$ states  all observation intervals which satisfy $D_1$ will also satisfy
 $D_2$.
 \item $L(\mbox{\textbf{follows}}(D_1\leadsto D_2/D_3))=
                 \{\sigma\ |\ \forall i,j:(\sigma,[i,j]\models D_1\implies$ \\
        \hspace*{37mm} $(\forall k:\sigma,[j,k]\models\focc{D_3}\implies\exists l\leq k:\sigma,[j,l]\models D_2))\}.$ \\
 Operator $\mbox{\textbf{follows}}(D_1\leadsto D_2/D_3)$ 
states that if any observation interval $[i,j]$ satisfies $D_1$
 and there is a following shortest interval $[j,k]$ which satisfies $D_3$ then 
there exists a prefix interval of $[j,k]$ which satisfies $D_2$.
 
 \item $L(\mbox{\textbf{triggers}}(D_1\leadsto D_2/D_3))= 
                   \{\sigma\ |\ \forall i,j:(\sigma,[i,j]\models D_1\implies $ \\
   \hspace*{38mm} $(\forall k:\sigma,[i,k]\models\focc{D_3}\implies\exists l\leq k:\sigma,[i,l]\models D_2))\}.$ \\
  Operator $\mbox{\textbf{triggers}}(D_1\leadsto D_2/D_3)$ states 
that if any observation interval $[i,j]$ satisfies $D_1$ and
  if $[i,k]$ is the shortest interval which satisfies $D_3$ 
then $D_2$ holds for a prefix interval of $[i,k]$. 
\end{itemize}

Based on this semantics, we can translate an atomic \secenl formula 
$\zeta$ without nominals into equivalent $\sece$ formula 
$\aleph(\zeta)$ as follows.
\begin{enumerate}
\item $\aleph(\pref{D})\defeq\neg((\neg D)\verb|^|\true)$.
\item $\aleph(\mbox{\textbf{init}}(D_1/ D_2))\defeq
pref(\focc{D_2}\Rightarrow D_1\verb|^|\true)$.
\item $\aleph(\mbox{\textbf{anti}}(D))\defeq \neg (\true\verb|^|D\verb|^|\true)$.
\item $\aleph(\mbox{\textbf{implies}}(D_1\leadsto D_2))\defeq
		\Box(D_1\implies D_2)$.
\item $\aleph(\mbox{\textbf{follows}}(D_1\leadsto D_2/D_3))\defeq\Box(\neg (D_1\verb|^|(\focc{D_3}\And\neg(D_2\verb|^|\true))))$.
\item $\aleph(\mbox{\textbf{triggers}}(D_1\leadsto D_2/D_3))\defeq\Box(D_1\verb|^|\true\implies(\focc{D_3}\implies D_2\verb|^|\true))\bigwedge
\Box(D_1\implies pref(\focc{D_3}\implies D_2\verb|^|\true))$.
\end{enumerate}

\begin{lemma}
For any $\zeta \in \secenl$, 
if $\zeta$ does not use nominals then 
$\sigma \in L(\zeta)$ iff $\sigma \in L(\aleph(\zeta))$. 
\end{lemma}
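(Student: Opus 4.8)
The plan is to reduce the claim to the six atomic operators and then verify each translation clause (1)--(6) by unfolding the \qddc semantics of $\aleph(\zeta)$ on the anchor interval $[0,\len{\sigma}]$ and matching it, line by line, with the set-builder definition of $L(\zeta)$. Since every \secenl formula is a boolean combination of atomic formulae, and since both $L(\cdot)$ (read as union/intersection/complement of languages) and $\aleph(\cdot)$ (read as $\Or$/$\And$/$\neg$ of \qddc formulae) commute with the boolean connectives, while \qddc satisfaction at the fixed interval $[0,\len{\sigma}]$ is itself boolean-homomorphic, it suffices to establish the equivalence for each atomic $\zeta$. I would state this reduction first and then dispatch the six cases.

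The one auxiliary fact I need before the case analysis is a locality/prefix lemma: for any \sece (indeed any \qddc) formula $D$, whether $\sigma,[b,e]\models D$ holds depends only on the restriction of $\sigma$ to positions $b,\dots,e$; this is a routine induction on the structure of $D$. Its consequence, that for every prefix word $\sigma'\le_{prefix}\sigma$ ending at cut point $i$ one has $\sigma'\models D \iff \sigma,[0,i]\models D$, is exactly what lets me translate the word-level definitions of \textbf{pref} and \textbf{init} (which quantify over prefix \emph{words}) into the interval-level \qddc formulae of clauses (1) and (2). With this in hand, \textbf{pref}, \textbf{anti} and \textbf{implies} are immediate: for (1), $\neg((\neg D)\frown\true)$ on $[0,\len{\sigma}]$ says no prefix interval falsifies $D$; for (3), $\neg(\true\frown D\frown\true)$ says no sub-interval satisfies $D$; and for (4), $\Box(D_1\implies D_2)$ says every sub-interval satisfying $D_1$ also satisfies $D_2$ --- each matching the corresponding $L(\cdot)$ verbatim. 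The \textbf{init} clause (2) needs one extra observation: $\focc{D_2}$ holds on $[0,j]$ precisely when $j$ is the \emph{least} endpoint with $\sigma,[0,j]\models D_2$, so clause (2) constrains only that least $j_0$, whereas $L(\textbf{init})$ quantifies over \emph{all} such $j$. These agree because the witness $k\le j_0$ supplied at the first occurrence also serves every later $j\ge j_0$ (as $k\le j_0\le j$), while endpoints before $j_0$ satisfy the implication vacuously.

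The two remaining cases hinge on reading $\focc{D_3}$ as ``the shortest $D_3$-interval'' and on where it is anchored. For \textbf{follows}, clause (5), the $D_3$-interval is anchored at the \emph{end} $j$ of the $D_1$-interval, which coincides with the cut point introduced by the chop $D_1\frown(\cdots)$; unfolding $\Box(\neg(D_1\frown(\focc{D_3}\And\neg(D_2\frown\true))))$ over all sub-intervals $[b,e]$ and all chop points $j$ forbids exactly the configuration ``$D_1$ on $[b,j]$, $[j,e]$ the shortest $D_3$-interval, and no prefix of $[j,e]$ satisfying $D_2$'', which is the negation of the $L(\textbf{follows})$ body (the $\focc{D_3}$ guard pins $e$ to the unique shortest endpoint $k$, so ranging $e$ over everything is harmless). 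For \textbf{triggers}, clause (6), the $D_3$-interval and the target $D_2$ are anchored at the \emph{start} $i$ of the $D_1$-interval rather than its end, and this misalignment is the crux: a single $\Box$-guarded chop cannot see both the $D_1$-endpoint $j$ and the shortest $D_3$-endpoint $k$ from the common anchor $i$. The fix is the two-conjunct decomposition, split on the order of $j$ and $k$. I would verify that the first conjunct, $\Box(D_1\frown\true\implies(\focc{D_3}\implies D_2\frown\true))$, covers the instances with $j\le k$ (it fires on $[i,k]$, where $D_1\frown\true$ witnesses a $D_1$-endpoint $\le k$), while the second, $\Box(D_1\implies \pref{\focc{D_3}\implies D_2\frown\true})$, covers $j\ge k$ (it fires on $[i,j]$ and uses the inner \textbf{pref} to reach the prefix $[i,k]$).

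The main obstacle is precisely this \textbf{triggers} case: I must show the case split on $j$ versus $k$ is both exhaustive and tight --- that the conjunction of the two clauses is violated at some $i$ \emph{iff} the $L(\textbf{triggers})$ body fails at $i$, with neither conjunct over-constraining. This amounts to a short but careful bidirectional argument choosing, for a failing $i$, the sub-interval $[i,k]$ (when $j\le k$) or $[i,j]$ (when $j\ge k$) to violate the appropriate conjunct, and conversely deriving both conjuncts from the semantic condition; the boundary $j=k$ is covered by both and causes no double counting. Everything else reduces to mechanical unfolding of the chop, $\Box$ and $\focc{\cdot}$ semantics once the prefix lemma and the first-occurrence reading of $\focc{\cdot}$ are in place.
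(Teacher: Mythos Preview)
Your proposal is correct and follows exactly the route the paper intends: the paper's own proof reads, in full, ``The proof follows from examination of the semantics of $\zeta$ and the definition of $\aleph(\zeta)$. We omit the details.'' What you have written is precisely those omitted details --- the reduction to atomic operators via boolean homomorphism, the locality/prefix observation needed for \textbf{pref} and \textbf{init}, the first-occurrence reading of $\focc{\cdot}$, and the $j$-versus-$k$ case split for \textbf{triggers} --- all of which check out against the semantics and the six $\aleph$-clauses as stated.
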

The proof follows from examination of the semantics of $\zeta$ 
and the definition of $\aleph(\zeta)$. We omit the details.

\subsubsection{Nominals}:
Consider a nominated formula $D:\Theta$ where $D$ is a \sece formula 
over propositional variables $\Sigma\union\Theta$.
As we shall see later, 
the propositional variables in $\Theta$ are treated as ``place holders'' 
- variables which are meant to be true exactly at one point - 
and we call them \emph{nominals} following \cite{FRS03}. 

Given an interval $[b,e]\in\intv{\nat}$ we define 
a \emph{nominal valuation} over $[b,e]$ to be a map 
$\nu : \Theta \rightarrow\{i\ |\ b\leq i\leq e\}$. 
It assigns a unique position within $[b,e]$ to each nominal variable. 
We can then straightforwardly define $\sigma,[b,e] \models_\nu D$ 
by constructing a word $\sigma_\nu$ over $\Sigma\union\Theta$ 
such that $\forall p\in\Sigma:p\in\sigma_\nu(i)\iff p\in\sigma(i)$ and 
$\forall u\in\Theta:u\in\sigma_\nu(i)\iff\nu(u)=i$. 
Then $\sigma_\nu,[b,e]\models D\iff\sigma,[b,e]\modelsv D$. 
We state that $\nu_1$ over $\Theta_1$ and $\nu_2$ over $\Theta_2$ 
are consistent if $\nu_1(u) = \nu_2(u)$ 
for all $u \in \Theta_1 \cap \Theta_2$. 
We denote this by $\nu_1 \parallel \nu_2$.

\paragraph{Semantics of \secenl}: 
Now we consider the semantics of \secenl 
where nominals are used and shared between different parts 
$D_1$, $D_2$ and $D_3$ of an atomic formula such as \imp. 
\begin{example}[lags]
\label{exam:lags}
Let $D_1:\{u,v\}$ be the formula \verb| (<u> ^ [[P]]| $\And$ \verb|((slen=n)| 
\verb|^ <v> ^ true)| which holds for an interval where $P$ is \true 
throughout the interval and $v$ marks the $n+1$ position from $u$ 
denoting the start of the interval. 
Let $D_2:\{v\}$ be the formula \verb| true ^ <v> ^ [[Q]]|.
Then, \textbf{implies}$(D_1:\{u,v\} \leadsto D_2:\{v\})$ 
states that for all observation intervals $[i,j]$ and all nominal valuations 
$\nu$ over $[i,j]$ if $\sigma,[i,j] \models_\nu D_1$ then $\sigma,[i,j] \models_\nu D_2$. 
This formula is given by \emph{live timing diagram} in Fig.~\ref{fig:lags} below. 
\begin{figure}
\centering
\includegraphics[height=25mm, keepaspectratio]{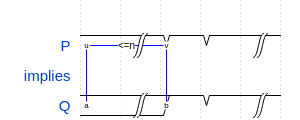}
\caption{Live timing diagram.}
\label{fig:lags}
\end{figure}
\footnote{Here we wish to point out that the illustration was made with 
the timing diagram editor WaveDrom and 
due to its limitation on naming nominals 
we were forced to rename the nominal $v$ appearing 
in $D_2$ as $a$.}
\end{example}

We now give the semantics of \secenl.
\begin{itemize} 
 \item $L(\pref{D:\Theta})=\{\sigma\ |\ \forall\sigma'\leq_{prefix}\sigma: \exists \nu .~~ \sigma'\models_\nu D\}.$
 \item $L(\init)=
         \{\sigma ~\mid~ \forall j \forall \nu:\sigma,[0,j]\models_\nu D_2\implies
               \exists k\leq j \exists \nu_2: \nu_1 \parallel \nu_2 \land \sigma,[0,k]\models_{\nu_2} D_1\}.$
 \item $L(\anti{D:\Theta})=\{\sigma\ |\ \forall i,j \forall \nu: \sigma,[i,j]\not\models_\nu D\}.$
 
 \item $L(\imp)=
         \{\sigma\ |\ \forall i,j \forall \nu_1:(\sigma,[i,j]\models_{\nu_1} D_1\implies \exists \nu_2: \nu_1 \parallel \nu_2 \land
                \sigma,[i,j]\models_{\nu_2} D_2)\}.$
 
 \item $L(\follows)=$ \\
         $\{\sigma\ |\ \forall i,j  \forall \nu_1:(\sigma,[i,j]\models_{\nu_1} D_1 \implies
         (\forall k \forall \nu_2 \parallel \nu_1: \sigma,[j,k]\models_{\nu_2} \focc{D_3} \implies \exists l\leq k \exists \nu_3:
         \nu_3 \parallel\nu_1 \land \nu_3 \parallel \nu_2 \land
         \sigma,[j,l]\models_{\nu_3} D_2))\}.$
 
 \item $L(\triggers)=$ \\
 \hspace*{1cm}    $\{\sigma\ |\ \forall i,j \forall \nu_1 :(\sigma,[i,j]\models_{\nu_1} D_1\implies
            (\forall k \forall \nu_2 \parallel \nu_1: \sigma,[i,k]\models_{\nu_2} \focc{D_3}
              \implies\exists l\leq k\exists \nu_3: \nu_3 \parallel\nu_1 \land \nu_3 \parallel \nu_2 \land \sigma,[i,l]\models D_2))\}.$
\end{itemize}

Based on the above semantics, 
we now formulate a \qddc formula equivalent 
to a \secenl formula. 
We define the following useful notations $\forall^1_\Theta:D$ 
and $\exists^1_\Theta:D$ as derived operators. 
These operators are essentially relativize quantifiers 
to restrict variables to singletons. 
\[
\begin{array}{c}
\forall^1_\Theta:D\iff
\forall u_1.\ \cdots\ .\forall u_n((scount\ u_1=1\ \wedge \cdots \wedge\ scount\ u_n=1)\ \implies\ D).\\ 
\exists^1_\Theta:D\iff
\exists u_1.\ \cdots\ .\exists u_n((scount\ u_1=1\ \wedge \cdots \wedge\ scount\ u_n=1)\ \wedge\ D).
\end{array}
\]

\paragraph{From \secenl to \qddc}:
We now define the translation $\aleph$ from \secenl to \qddc. 
\begin{enumerate}
\item $\aleph(\pref{D:\Theta})\defeq\neg(\exists^1_\Theta:\neg D\verb|^|\true)$.
\item $\aleph(\init)\defeq
pref(\forall^1_{\Theta_2}:(D_2\Rightarrow\exists^1_{\Theta_1-\Theta_2}:D_1\verb|^|\true))$.
\item $\aleph(\neg\exists D:\Theta)\defeq \neg (\exists^1_\Theta:\true\verb|^|D\verb|^|\true)$.
\item $\aleph(\imp)\defeq
		\Box(\forall^1_{\Theta_1}:(D_1\implies\exists^1_{\Theta_2-\Theta_1}:D_2))$.
\item $\aleph(\follows)\defeq\\
	\Box(\forall^1_{\Theta_1}:\forall^1_{\Theta_3-\Theta_1}:\exists^1_{\Theta_2-(\Theta_1\union\Theta_3)}:
	\neg (D_1\verb|^|(\focc{D_3}\And\neg(D_2\verb|^|\true))))$.
\item $\aleph(\triggers)\defeq\\
\Box(\forall^1_{\Theta_1}:(D_1\verb|^|\true\implies(\forall^1_{\Theta_3-\Theta_1}:(\focc{D_3}\implies\exists_{\Theta_2-(\Theta_1\union\Theta_3)}^1:D_2\verb|^|\true))))\bigwedge \\
\Box(\forall^1_{\Theta_1}:(D_1\implies pref(\forall^1_{\Theta_3-\Theta_1}:(\focc{D_3}\implies\exists_{\Theta_2-(\Theta_1\union\Theta_3)}^1:D_2\verb|^|\true))))$.
\end{enumerate}
\begin{theorem}
For any word $\sigma$ over $\Sigma$ and any $\zeta \in \secenl$ 
we have that $\sigma \in L(\zeta)$ iff $\sigma \in L(\aleph(\zeta))$. 
Moreover, the translation $\aleph(\zeta)$ can be computed 
in time linear in the size of $\zeta$. 
\end{theorem}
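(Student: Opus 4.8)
The plan is to prove the theorem in two independent parts, exactly matching its two claims: first semantic correctness of $\aleph$ on each atomic \secenl form, then the linear-time bound. For correctness, since an arbitrary \secenl formula is a boolean combination of atomic formulae and both $L(\cdot)$ and $\aleph(\cdot)$ distribute over boolean connectives (the $\qddc$ translation of a boolean combination being the same boolean combination of the translated atoms), it suffices to establish $\sigma \in L(\zeta) \iff \sigma \in L(\aleph(\zeta))$ for each of the six atomic shapes \init, \anti{D:\Theta}, \pref{D:\Theta}, \imp, \follows, and \triggers. I would handle each case by unfolding the set-builder definition of $L(\zeta)$ given in the semantics and showing it coincides, as a set of words $\sigma$, with the satisfaction set of the $\qddc$ formula $\aleph(\zeta)$ read at the master interval $[0,\len{\sigma}]$.

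The conceptual key is to reconcile the two distinct treatments of nominals. In the \secenl semantics a nominal valuation $\nu:\Theta\to\{i\mid b\leq i\leq e\}$ is an external map and $\sigma,[b,e]\modelsv D$ is defined via the $\Theta$-variant word $\sigma_\nu$; on the $\qddc$ side the same effect is obtained by the relativized quantifiers $\forall^1_\Theta$ and $\exists^1_\Theta$, which range over exactly those $\Theta$-variants of $\sigma$ in which each $u\in\Theta$ holds at precisely one position (enforced by $scount\ u=1$). First I would prove a bridging lemma: for a \sece formula $D$ over $\Sigma\union\Theta$, the $\qddc$ formula $\exists^1_\Theta:D$ holds at an interval iff there is a nominal valuation $\nu$ over that interval with $\sigma,[b,e]\modelsv D$, and dually for $\forall^1_\Theta$. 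This lemma, together with the already-established $\sigma_\nu,[b,e]\models D \iff \sigma,[b,e]\modelsv D$, reduces each case to matching quantifier structure. I would then observe that the set-difference superscripts such as $\Theta_2-\Theta_1$ and $\Theta_2-(\Theta_1\union\Theta_3)$ are precisely what realizes the consistency requirement $\nu_1\parallel\nu_2$: variables shared between $D_1$ and $D_2$ are bound once by $\forall^1_{\Theta_1}$ and reused rather than re-quantified, so the $\qddc$ formula automatically forces agreement on $\Theta_1\cap\Theta_2$, while the fresh variables get their own existential witness. For the nominal-free fragment the earlier lemma (where $\aleph$ is defined without the singleton quantifiers) already gives the result, so I would present the nominal-free equalities as the base and show the quantified versions degenerate to them when all $\Theta_i$ are empty, giving a useful consistency check.

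The hardest cases will be \follows and \triggers, because they nest three quantifier layers over $D_1$, the first-occurrence operator $\focc{D_3}=D_3\And\neg(D_3\verb|^|ext)$, and $D_2$, with three pairwise-consistency constraints $\nu_3\parallel\nu_1$, $\nu_3\parallel\nu_2$, $\nu_2\parallel\nu_1$. The main obstacle is verifying that the flat nesting $\forall^1_{\Theta_1}:\forall^1_{\Theta_3-\Theta_1}:\exists^1_{\Theta_2-(\Theta_1\union\Theta_3)}$ in $\aleph(\follows)$ faithfully encodes the three separate valuations and all three consistency conditions simultaneously, and that the single negated chop $\neg(D_1\verb|^|(\focc{D_3}\And\neg(D_2\verb|^|\true)))$ correctly captures ``for the shortest $[j,k]$ satisfying $D_3$ there is a prefix $[j,l]$ satisfying $D_2$''. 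I would argue this by converting the universally quantified $\implies$ chain in $L(\follows)$ into its contrapositive existential form, matching it against the existence of a chop-split witnessing $\neg(D_2\verb|^|\true)$, and using $\focc{D_3}$ to pin $k$ to the shortest such position exactly as in its stated prefix-minimality property. For \triggers the extra subtlety is the conjunction of two $\Box$-guarded conjuncts: one guards against $D_3$ occurring strictly before $D_1$ completes ($D_1\verb|^|\true$) and the other uses $pref(\cdots)$ to cover the prefix-interval obligation; I would check these jointly realize the $[i,k]$ (rather than $[j,k]$) anchoring that distinguishes \triggers from \follows.

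For the complexity claim, I would give a structural-induction size argument. Each atomic translation introduces only a constant number of $\qddc$ connectives ($\Box$, $\neg$, $\verb|^|$, $\And$, $\implies$) plus the relativized quantifier blocks, and each block $\forall^1_\Theta$ or $\exists^1_\Theta$ expands to $|\Theta|$ quantifiers each carrying one $scount\ u=1$ conjunct of constant size; hence the expansion contributes $O(|\Theta|)$ symbols, which is linear in the portion of $\zeta$ naming those nominals. Since $\aleph$ over a boolean combination is homomorphic, the total size of $\aleph(\zeta)$ is bounded by a constant multiple of the size of $\zeta$, and the translation itself is a single syntax-directed pass, so it runs in linear time. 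I would note that the definition of $\focc{D_3}$ duplicates $D_3$ once, contributing only a constant blow-up factor and preserving linearity.
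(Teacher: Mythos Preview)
Your proposal is correct and follows the same route the paper takes: the paper's entire proof is the one-line remark that the equivalence ``follows from the semantics of $\zeta$ and the definition of $\aleph(\zeta)$,'' and your plan is exactly an unpacking of that sentence---reduce to atomic shapes, bridge nominal valuations to the relativized singleton quantifiers, and check each clause---together with the straightforward size count for linearity. In other words, the paper omits the details you have supplied; there is no alternative argument to compare against.
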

The proof follows from the semantics of $\zeta$ and the definition of $\aleph(\zeta)$. 

\begin{lemma}
Let $\zeta=\imp$ and let $|\A(D_i)|=m_i$ for $i\in\{1,2\}$. 
Then there exists a \dfa $\A(\zeta)$ of size 
at most $2^{2^{m_1m_2}}$ for $\zeta$. 
\end{lemma}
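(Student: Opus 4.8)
The plan is to build $\A(\zeta)$ in three stages from the component automata, matching the three ``layers'' of the translation $\aleph(\imp)=\Box(\forall^1_{\Theta_1}:(D_1\implies\exists^1_{\Theta_2-\Theta_1}:D_2))$ obtained above; write $\psi$ for the inner interval formula $\forall^1_{\Theta_1}:(D_1\implies\exists^1_{\Theta_2-\Theta_1}:D_2)$. First I would put both $\A(D_1)$ and $\A(D_2)$ over the common alphabet $2^{V}$ with $V=\Sigma\union\Theta_1\union\Theta_2$ by cylindrification (letting each automaton ignore the variables it does not mention), which changes neither automaton's size, and then form the synchronous product $\A(D_1)\times\A(D_2)$ of size $m_1 m_2$. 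Reading a word together with a full marking of the nominals in $\Theta_1\union\Theta_2$, the state of this product records the pair of component states, so from its two acceptance bits one can read off, at the end of any interval, whether $D_1$ and whether $D_2$ hold under that marking.

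The second stage realises the nominal quantifiers and the implication at the interval level, producing a \dfa over $2^\Sigma$ for the language $\{\sigma\ |\ \sigma\models\psi\}$. I would fold the singleton constraints $scount\ u=1$ for $u\in\Theta_1\union\Theta_2$ into the product and then project all nominal variables away in a single step, obtaining an \nfa that nondeterministically guesses the marking; by the Rabin--Scott subset construction this determinises to a \dfa $\A_\psi$. Since \dfa complementation is free, I can dualise to turn the universal quantifier $\forall^1_{\Theta_1}$ and the implication into the existential form $\neg\exists^1_{\Theta_1}:(D_1\And\neg\exists^1_{\Theta_2-\Theta_1}:D_2)$ without further blow-up. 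The target here is to show $|\A_\psi|\le 2^{m_1 m_2}$, i.e.\ that the entire nominal elimination costs only one exponential in the product size.

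The third stage turns $\A_\psi$ into an automaton for $\Box\psi$, i.e.\ for $\zeta$. Because $\sigma,[b,e]\models\psi$ depends only on the factor of $\sigma$ on $[b,e]$, the condition ``every subinterval satisfies $\psi$'' is checked by the standard construction that keeps, as a single state, the set of current states of all copies of $\A_\psi$ launched at the positions seen so far, launching a fresh copy at each step and demanding that every active copy sits in an accepting state. Its states are subsets of $Q_{\A_\psi}$, so this \dfa has at most $2^{|\A_\psi|}\le 2^{2^{m_1 m_2}}$ states and accepts exactly $L(\zeta)$, which yields the claimed bound.

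The main obstacle is the second stage: a naive singleton-relativised projection tracks, for each nominal, whether it has been marked zero, one, or more than once, which multiplies the pre-determinisation \nfa by a per-nominal factor and would push the exponent above $m_1 m_2$. The crux is therefore to carry out all the singleton bookkeeping and all the projections inside one \nfa built on the $m_1 m_2$-state product and to argue that its determinisation stays within $2^{m_1 m_2}$. By comparison the correctness of the dualisation and the off-by-one in the interval/word conventions are routine, as is verifying that the third-stage construction indeed computes $\Box$.
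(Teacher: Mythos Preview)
Your three-stage plan is natural, and Stages~1 and~3 are sound: the product has size $m_1 m_2$, and the ``set of active copies'' construction for $\Box$ is the right one and costs one exponential in $|\A_\psi|$. The gap is in Stage~2, and it is not the obstacle you name.

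You correctly rewrite $\psi$ as $\neg\exists^1_{\Theta_1}:(D_1\And\neg\exists^1_{\Theta_2-\Theta_1}:D_2)$, but then propose to ``project all nominal variables away in a single step.'' That is impossible here: the two projections are separated by a negation, so the quantifier prefix is genuinely $\Pi_2$. The inner $\exists^1_{\Theta_2-\Theta_1}$ must be determinised \emph{before} it can be complemented, and only after that can the outer $\exists^1_{\Theta_1}$ be applied; one nondeterministic guess cannot simulate a $\forall\exists$ alternation. Concretely, the inner projection on $\A(D_2)$ already yields a \dfa of size $2^{O(m_2)}$; conjoining with $D_1$ gives $m_1\cdot 2^{O(m_2)}$; the outer projection then gives $|\A_\psi|\le 2^{m_1\cdot 2^{O(m_2)}}$, not $2^{m_1 m_2}$. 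With your Stage~3 on top this becomes triply exponential and overshoots the lemma. The singleton-tracking factor you flag is only a polynomial multiplier in the exponent; the structural obstacle is the alternation, which your dualisation exposes but does not remove.

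The paper splits the two exponentials differently: it attributes them to the two existential quantifiers and lets the $\Box$ ride for free on the outer one. Writing $\Box\psi=\neg\Diamond\neg\psi$ and $\neg\psi=\exists^1_{\Theta_1}:\chi$ with $\chi=D_1\And\neg\exists^1_{\Theta_2-\Theta_1}:D_2$, one observes that $\Diamond$ (guess a subinterval) and $\exists^1_{\Theta_1}$ (guess a marking) are both nondeterministic choices and can be made together in a single \nfa built over the \dfa for $\chi$; one determinisation then handles both. Thus the first blow-up is the inner $\exists^1_{\Theta_2-\Theta_1}$ and the second is $\Diamond\exists^1_{\Theta_1}$ taken together, matching the stated $2^{2^{m_1 m_2}}$. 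If you insist on keeping $\Box$ as a separate Stage~3, you must accept two exponentials inside Stage~2 and hence a weaker (still elementary) overall bound.
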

\begin{proof}
The formula $\zeta$ can be written in terms of a negation and two existential quantifiers. 
Note that each application of existential quantifier will result in an \nfa 
and each time we determinize we get a \dfa which is at most exponential in the size of \nfa. 
Since that both $\A(D_1)$ and $\A(D_2)$ are \dfa's to start with, 
this implies we can construct a \dfa $\A(\zeta)$ of size at most $2^{2^{m_1m_2}}$ for $\zeta$. 
\end{proof}

In an similar way we can show that the size of formula automata for 
other \secenl atomic formulae are also elementary.

\begin{lemma}
For any $\zeta \in \secenl$ the size of the automaton $\A(\zeta)$ for $\zeta$ is elementary.  
\end{lemma}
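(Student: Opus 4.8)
The plan is to prove that for every $\zeta \in \secenl$, the formula automaton $\A(\zeta)$ has elementary size. The key observation is that the preceding theorem already gives us a linear-time translation $\aleph$ from \secenl into \qddc, and the preceding lemma establishes the base case: for $\zeta = \imp$, the automaton $\A(\zeta)$ has size at most $2^{2^{m_1 m_2}}$ where $m_i = |\A(D_i)|$. So the real content here is to bound $m_i = |\A(D_i)|$ for an \sece formula $D_i$, and then to extend the single-operator bound to the remaining five atomic operators and finally to arbitrary boolean combinations.

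First I would recall from the Corollary following Lemma~\ref{lemma:ce-size} that for any \sece formula $D$ of size $n$ we can build a language-equivalent \dfa of size $\bigomega{2^{2^{2^n}}}$, which is elementary (a tower of fixed height three) in $n$. This pins down $m_i$ as an elementary function of the size of $D_i$. Substituting into the single-operator bound from the previous lemma then shows $|\A(\imp)|$ is bounded by a tower of fixed height in the size of $\zeta$. I would then carry out the analogous counting for each of the other atomic forms --- \pref, \init, \anti, \follows, \triggers --- by inspecting their translations $\aleph(\cdot)$ defined just above. Each translation is a fixed \qddc expression built from the $D_i$ using a bounded number of relativized quantifiers $\forall^1$, $\exists^1$, chops, conjunctions, and a top-level negation. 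Since each quantifier application costs one exponential (determinization of the projected \nfa) and the number of such applications is bounded by a constant (the number of nominals is bounded by the formula size, but crucially the quantifier block is processed in one projection-then-determinization step, not one exponential per variable), each atomic formula yields a \dfa whose size is a fixed-height tower over the elementary bounds for the $\A(D_i)$.

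The inductive step for boolean combinations is routine: an \secenl formula is by definition a boolean combination of atomic formulae, and the product construction for $\wedge$ and $\vee$ together with complementation for $\neg$ each cost at most a polynomial (product) or no blow-up (complementing a \dfa) in the sizes of the constituent automata. Since a boolean combination has a fixed structure of bounded nesting relative to its size, and the elementary bounds compose to give elementary bounds, the whole automaton $\A(\zeta)$ stays elementary. I would phrase this as: the height of the exponential tower bounding $|\A(\zeta)|$ is a constant plus the maximum tower height arising among the atomic subformulae, each of which we have just bounded by a fixed constant.

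The main obstacle I anticipate is the bookkeeping around the relativized quantifiers $\forall^1_\Theta$ and $\exists^1_\Theta$. A naive reading suggests one exponential per nominal, which would make the tower height grow with $|\Theta|$ and hence with the formula size, defeating elementariness. The crucial point to argue carefully is that a block of existential quantifiers over several nominals corresponds to a single projection of the alphabet followed by one determinization, costing only a single exponential regardless of how many variables are projected away; the $scount\ u = 1$ guards are \qddc constraints absorbed into the automaton before projection and do not multiply the tower height. Making this precise --- that the quantifier prefix contributes a bounded, not a size-dependent, number of exponentiation steps --- is where I would spend the most care, and it is exactly the step that distinguishes the elementary bound claimed here from the non-elementary bound of full \qddc in Theorem~\ref{theorem:formula-automaton}, where unbounded alternation of quantifiers and negations drives the tower height.
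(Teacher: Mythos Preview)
Your proposal is correct and follows the same line the paper gestures at: the paper's own proof is merely the remark that the argument of the preceding lemma (counting a fixed number of quantifier-induced exponentiations over the \sece base automata) applies ``in a similar way'' to the remaining atomic operators. You go well beyond this by explicitly bounding the \sece base case via the Corollary, by treating the boolean-combination layer, and---most importantly---by articulating the point the paper leaves implicit, namely that each $\exists^1_\Theta$ or $\forall^1_\Theta$ block costs a single projection-plus-determinization regardless of $|\Theta|$, so that the number of exponentiation steps is bounded by the fixed nesting depth of quantifier blocks in the $\aleph$-templates (at most three) rather than by the number of nominals.
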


\section{Formalizing timing diagrams}
\label{section:td}
In this section we give a formal semantics to timing diagrams 
and formula translation from timing diagrams to \secenl. 
We begin by giving a textual syntax for timing diagrams 
which is derived from the timing diagram format of WaveDrom \cite{CP16,Wav16}. 

The symbols in a \emph{waveform} come 
from $\Lambda=\{\0,\1,\2,\x,\0|,\1|,\2|,\x|\}$ and 
$\Theta$, an atomic set of nominals. 
Let $\Gamma=\Theta\union\Lambda$. 
The syntax of a \emph{waveform} over $\Gamma$ is given by the grammar:
\[
\pi:=\0\ \ \|\ \ \1\ \ \|\ \ \2\ \ \|\ \ \x\ \ \|\ \ \0|\ \ \|\ \ \1|\ \ \|\ \ \2|\ \ \|\ \ \x|\ \ \|\ \ u:\pi\ \ \|\ \ \pi_1\pi_2, 
\]
where $u\in\Theta$ and $\pi\in\Lambda$. 
We call the elements in $\Theta$ the \emph{nominals}. 
As we shall see later, 
when we convert a waveform to a \secenl formula 
the nominals that appear in the formula are exactly 
the nominals in the waveform and hence the name. 
Let \wf be the set of all waveforms over $\Gamma$. 
 
An example of a waveform is 
\textsf{\0\1\textsf{a}:\2\x\0\1\1\x\textsf{b}:\x\2$|$\2\2\0\textsf{c}:\0\0} 
with $\Theta=\{\textsf{a,b,c}\}$. 
Intuitively, in a waveform \0 denotes \emph{low}, 
\1 \emph{high}, \2 and \x\ \emph{don't care}s 
(there is a subtle difference between \2 and \x\ though) 
and ``$|$'' the \emph{stuttering} operator. 


Let $\Sigma$ be a set of propositional variables. 
A \emph{timing diagram over $\Sigma$} is a tuple $\ang{\W,\Sigma,C,\Theta}$ 
where $\W=\{W_p\in\wf\ |\ p\in\Sigma\}$ and 
$C\subset\Theta\times\Theta\times\intv{\nat}$ 
a set of timing constraints.

Fig.~\ref{fig:wavedrom2} shows an example timing diagram 
$T=\ang{\{W_p,W_q\},\{(a,b,[10:10]), (a,$ 
$d,[1:8]), (c,d,[20:30])\}, \{a,b,c,d,e,f\}}$ 
along with its rendering in WaveDrom. 
The shared nominals have to be renamed in WaveDrom 
as commented in \S\ref{subsection:secenl}, 
in this case $a$ and $c$ in $W_q$ have been renamed $g$ and $h$ respectively. 
As in the case with \secenl formulas, 
nominals act as place holders in timing diagrams which 
can be shared among multiple waveforms. 
For example, in the figure $W_p$ and $W_q$ share the nominals $a$ and $c$. 
As a result a timing constraint in one timing diagram 
can implicitly induce a timing constraint in the other.   
For instance, even though there is no direct timing constraint between $a$ 
and $c$ in $W_p$ the constraints between $a$ and $d$, and $d$ and $c$ 
together impose one on them. 
\begin{figure}[!h]
\framebox{\parbox[t][11mm]{\textwidth}{
$\begin{array}{l}
\textsf{waveform \ensuremath{W_p} - \ \ensuremath{\mathrm{\0\1\textsf{a}:\2\x\0\1\1\x\textsf{b}:\x\2|\2\2\0\textsf{c}:\0\0}}}\\
\textsf{waveform \ensuremath{W_q} - \ \ensuremath{\mathrm{\0\0\textsf{a}:\0|\textsf{d}:\1\1|\textsf{e}:\x\x\x|\textsf{f}:\0\1\textsf{c}:\1\1}}}\\
\textsf{timing constraints: d-a$\in$[1:8], c-d$\in$[20:30], b-a$\in$[10:10]}\\
\end{array}$}}
\includegraphics[width=\textwidth, keepaspectratio]{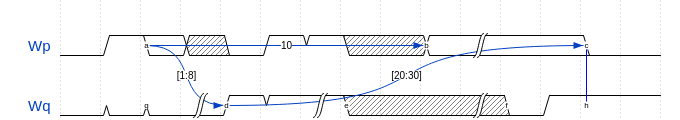}
\caption{Timing diagram $T$ and its WaveDrom rendering.}
\label{fig:wavedrom2}
\end{figure}

Let $T=\ang{\W,\Sigma,C,\Theta}$, $\W=\{W_p\in\wf\ |\ p\in\Sigma\}$, 
be a timing diagram. 
Let $\nu:\Theta\to[b,e]$ be a nominal valuation. 
Let $\sigma:[0,n]\to2^\Sigma$ be a word over $\Sigma$ and 
for all $p\in\Sigma$ let $\sigma_p:[0,n]\to\{0,1\}$ 
given by $\sigma_p(i)=1$ iff $p\in\sigma(i)$. 
Then the satisfaction relation $\sigma_p$ over 
a waveform $W$ under the valuation $\nu$ 
is defined as follows. 
\[
\begin{array}{rcl}
\sigma_p,[b,e]\modelsv \0&\mbox{iff}&e=b+1\mbox{ and }\sigma_p(b)=0,\\
\sigma_p,[b,e]\modelsv \1&\mbox{iff}&e=b+1\mbox{ and }\sigma_p(b)=1,\\
\sigma_p,[b,e]\modelsv \lambda&\mbox{iff}&e=b+1\mbox{ and }\lambda\in\{\2,\x\},\\
\sigma_p,[b,e]\modelsv \0|&\mbox{iff}&\forall b\leq i<e:\sigma_p(i)=0,\\
\sigma_p,[b,e]\modelsv \1|&\mbox{iff}&\forall b\leq i<e:\sigma_p(i)=1,\\
\sigma_p,[b,e]\modelsv \2|&\mbox{iff}&\forall b\leq i<e:\sigma_p(i)\in\{0,1\},\\
\sigma_p,[b,e]\modelsv \x|&\mbox{iff}&\forall b\leq i<e:\sigma_p(i)=1
\mbox{ or }\forall b\leq i<e:\sigma_p(i)=0,\\
\sigma_p,[b,e]\modelsv u:W&\mbox{iff}&\nu(u)=b\mbox{ and }\sigma_p,[b,e]\modelsv W,\\
\sigma_p,[b,e]\modelsv VW&\mbox{iff}&\exists b\leq i<e:\sigma_p,[b,i]\models_{\nu_1} V
\mbox{ and }\sigma_p,[i,e]\models_{\nu_2} W,\\
&&\mbox{ and }\nu_1||\nu\mbox{ and }\nu_2||\nu.\\
\end{array}
\]
We say $\nu\models C$ iff $\forall (a,b,\ang{l,r})\in C:\nu(b)-\nu(a)\in\ang{l,r}$.
We define $\sigma,[b,e]\modelsv\ang{\W,\Sigma,C,\Theta}$ iff 
$\forall p\in\Sigma:\sigma_p,[b,e]\modelsv W_p$ and $\nu\models C$.

\subsection{Waveform to \secenl translation}
\label{subsection:td2secenl}
We translate a waveform $W_p$ to \secenl as follows: 
every \0 occurring in $P$ is translated to \{\{$\neg$ \textsf{P}\}\},
\1 to \{\{\textsf{P}\}\},
\2 and \x\ to \textsf{slen=1},
\0$|$ to \textsf{pt$\vee$[$\neg$ P]}, 
\1$|$ to \textsf{pt$\vee$[P]}, 
\2$|$ to \textsf{true}, and 
\x$|$ to \textsf{pt$\vee$[P]$\vee$[$\neg$ P]}. 
A nominal $u$ that is appearing in $W_p$ 
is translated to \textsf{\textless u\textgreater}. 
For instance, the waveform 
$W_p$=\textsf{\0\1\textsf{a}:\2\x\0\1\1\x\textsf{b}:\x\2$|$\2\2\0\textsf{c}:\0\0} 
in $T$ of Fig.~\ref{fig:wavedrom2} will be translated to \secenl formula as below.
\begin{table}[!h]
\begin{tabular}{cccccccc}
\multicolumn{8}{l}{
\textsf{(\{\{$\neg$ P\}\}\textasciicircum\{\{P\}\}\textasciicircum
\textless a\textgreater\textasciicircum (slen=1)\textasciicircum (slen=1)\textasciicircum\{\{$\neg$ P\}\}\textasciicircum
\{\{P\}\}\textasciicircum\{\{P\}\}\textasciicircum(slen=1)\textasciicircum\textless b\textgreater\textasciicircum}\qquad\qquad}\\
&&&&\multicolumn{4}{r}{\textsf{(slen=1)\textasciicircum\textsf{true}\textasciicircum(slen=1)\textasciicircum(slen=1)
\textasciicircum\{\{$\neg$ P\}\}\textasciicircum\textless c\textgreater\textasciicircum\{\{$\neg$ P\}\}\textasciicircum\{\{$\neg$ P\}\}).}}\\
\end{tabular}
\end{table}

\noindent We denote the translated \secenl formula by $\xi(T,W_p)$. 
Similarly we can translate $W_q$ to get the formula $\xi(T,W_q)$. 
The timing constraints in $C$ is roughly translated to 
the \secenl formula $\xi(T,C)$ as follows.
\begin{table}[!h]
\begin{tabular}{cccccccc}
\multicolumn{8}{l}{
\textsf{((true\textasciicircum\textless a\textgreater\textasciicircum
((slen$\geq$ 1)\ $\wedge$\ (slen$\leq$ 8))\textasciicircum\textless d\textgreater\textasciicircum true)\ $\wedge$\qquad\qquad\qquad\qquad\qquad\qquad\qquad\qquad}}\\
\multicolumn{8}{c}{\textsf{(true\textasciicircum\textless d\textgreater\textasciicircum
((slen$\geq$ 20)\ $\wedge$\ (slen$\leq$ 30))\textasciicircum\textless c\textgreater\textasciicircum true)\ $\wedge$}}\\
\multicolumn{8}{r}{\textsf{(true\textasciicircum\textless a\textgreater\textasciicircum
(slen=10)\textasciicircum\textless b\textgreater\textasciicircum true)).}}\\
\end{tabular}
\end{table}

We define $\xi(T)=\xi(T,W_p)\And\xi(T,W_q)\And\xi(T,C)$. 
For a timing diagram $T=\ang{\W,\Sigma,C,\Theta}$, $\W=\{W_p\ |\ p\in\Sigma\}$ 
we define $\xi(T)=\bigwedge_{p\in\Sigma}\xi(T,W_p)\bigwedge\And\xi(T,C)$. 
\begin{theorem}
\label{lem:tdtosecenl} 
Let $T$ be a timing diagram. 
Then, for all $\sigma\in\Sigma^*$, for all $[b,e]\in\intv{\sigma}$ and 
for all nominal valuation $\nu$ over $[b,e]$, 
$\sigma,[b,e]\modelsv T$ iff $\sigma,[b,e]\modelsv\xi(T):\Theta$.  
Also, the translation $\xi(T):\Theta$ is linear in the size of $T$. 
\end{theorem}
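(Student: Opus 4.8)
The plan is to exploit the purely conjunctive shape of both sides and reduce the statement to two independent sub-claims. By the definition of $\modelsv$ for timing diagrams, $\sigma,[b,e]\modelsv T$ holds iff $\sigma_p,[b,e]\modelsv W_p$ for every $p\in\Sigma$ and $\nu\models C$; and by construction $\xi(T)$ is the conjunction of the $\xi(T,W_p)$ with $\xi(T,C)$. Since $\And$ in \qddc is evaluated on the same interval and the same valuation $\nu$, it suffices to prove (i) $\sigma_p,[b,e]\modelsv W_p$ iff $\sigma,[b,e]\modelsv\xi(T,W_p)$ for each $p$, and (ii) $\nu\models C$ iff $\sigma,[b,e]\modelsv\xi(T,C)$. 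The bulk of the work is (i).

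For (i) I would fix $p$, abbreviate its variable as $P$, and induct on the structure of the waveform $W\in\wf$. The eight base cases are immediate comparisons of the clause defining $\modelsv$ for a symbol against the semantics of its image: e.g.\ $\sigma_p,[b,e]\modelsv\0$, which requires $e=b+1$ and $\sigma_p(b)=0$, coincides exactly with $\sigma,[b,e]\models\dcurly{\neg P}$; and the stuttering symbol $\0|$, which requires $\sigma_p(i)=0$ for all $b\le i<e$, coincides with $\pt\Or\sq{\neg P}$ precisely because the $\pt$ disjunct covers the degenerate interval $b=e$ on which the universal clause is vacuously true. The concatenation case $W=VW'$ matches the waveform clause (an existential split at some $b\le i<e$ with $\nu_1\parallel\nu$ and $\nu_2\parallel\nu$) against the chop semantics of $\xi(V)$ followed by $\xi(W')$, applying the induction hypothesis to each factor; here one notes that a single valuation $\nu$ drives the chop, so the consistency side-conditions $\nu_1\parallel\nu$ and $\nu_2\parallel\nu$ are automatic.

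The delicate case, and the step I expect to be the main obstacle, is the nominal $u:W'$. Its semantics only fixes $\nu(u)=b$ and otherwise hands the whole interval to $W'$, whereas the translation chops in the marker $\ang{u}$ ahead of $\xi(W')$. The soundness direction (waveform $\Rightarrow$ translation) is easy, taking the chop split at $b$ itself. The subtle direction is the converse: one must rule out chop splits in which the marker $\ang{u}$ spans a nondegenerate sub-interval and thereby leaves interior positions constrained by no sub-formula at all. This is exactly where the rigidity of the translation must be used: every non-stuttering symbol is sent to a length-one formula ($\dcurly{\cdot}$ or $\slen=1$), each consuming precisely one position, so that together with the fixed valuation $\nu$ (which pins $\ang{u}$ to the single position $\nu(u)$) the admissible split points are forced and $\ang{u}$ necessarily acts as a zero-width position marker. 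Carrying this out so that the equivalence holds for \emph{every} $[b,e]$, rather than only for the canonical interval whose width equals the symbol count, is the crux; the interplay with the variable-length stuttering images also has to be tracked here, since they are the only places where slack in $e-b$ can legitimately be absorbed.

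Claim (ii) is comparatively routine. Unfolding $\nu\models C$ gives $\nu(b)-\nu(a)\in\ang{l,r}$ for each $(a,b,\ang{l,r})\in C$, and the matching conjunct of $\xi(T,C)$ is \verb|true^<a>^((slen>=l) & (slen<=r))^<b>^true|. Under $\modelsv$ this holds iff there is a split marking $\nu(a)$ and $\nu(b)$ with the interval between the two markers having $\slen=\nu(b)-\nu(a)$ in $[l,r]$; the leading and trailing \verb|true| let the two markers sit anywhere in $[b,e]$, which is compatible with $\nu$ mapping $\Theta$ into $[b,e]$, so the two sides agree. Finally, linearity is a counting argument independent of correctness: $\xi$ emits one bounded-size sub-formula per waveform symbol and per nominal and one fixed-size conjunct per element of $C$, whence $|\xi(T){:}\Theta|=\bigo{|T|}$ and the translation is computable in linear time.
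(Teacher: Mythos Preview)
Your overall strategy---decompose into per-waveform equivalences plus the constraint conjunct, then induct on waveform structure---is exactly what the paper does (its entire proof is the single sentence ``by induction on the length of the waveform''), and your handling of the eight base symbols, of concatenation versus chop, and of $\xi(T,C)$ is correct.

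The gap is precisely where you flag it, but your proposed resolution does not close it. The valuation $\nu$ fixes the \emph{position at which $u$ is true}; it does not fix the chop split. The formula $\ang{u}$ evaluated on a subinterval $[b,i]$ inspects only the left endpoint $b$ and imposes no constraint whatsoever on $i$. Hence in the chop of $\ang{u}$ with $\xi(W')$ the split point is \emph{not} forced to $b$, regardless of how many surrounding symbols translate to length-one formulas. A concrete witness: take the waveform $W_p=u{:}\0$, whose translation is $\ang{u}$ chopped with $\dcurly{\neg p}$; on a word where $p$ is everywhere false, the interval $[0,2]$, and $\nu(u)=0$, the waveform clause for $\0$ forces $e=b+1$ and so fails, whereas the translation is satisfied by splitting at $i=1$ (since $\ang{u}$ holds on $[0,1]$ and $\dcurly{\neg p}$ holds on $[1,2]$). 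The same slack reappears whenever a nominal is followed by a stuttering symbol, so appealing to neighbouring fixed-length pieces cannot rescue the argument either.

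What would make your inductive step go through is translating a nominal as a genuine point marker, namely $\ang{u}\land\pt$ rather than bare $\ang{u}$; the $\pt$ conjunct forces the chop to split at $b$, and the case $u{:}W'$ then reduces in two lines to the hypothesis on $W'$. The paper's one-line proof elides this entirely, but without such an adjustment the literal ``for all $[b,e]$ and all $\nu$'' equivalence fails on exactly the non-canonical intervals you yourself single out as the crux.
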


\begin{proof}
Proof is not difficult and is by induction on the 
length of the waveform.
\end{proof}

Due above theorem we can now use timing diagrams in place of 
nominated formulas with liveness operators. 
We call such timing diagrams \emph{live timing diagrams}. 
For an example of a live timing diagram see Fig.~\ref{fig:lags}.

\subsection{Comparision with other temporal logics}
\label{subsection:comparision}

In previous section, Lemma \ref{lem:tdtosecenl} showed that timing diagrams can be
translated to equivalent  \sececntnl formulas with only linear blowup in size.
In this section we compare our logic SeCeNL
with other relevent logics in the literature viz, 
\ltl, discrete time \mtl, and \psl.  Of these, \psl is the most expressive and discrete time \mtl and \ltl are its syntactic subset. 
We show by  examples that \sececntnl formulae are more succint (smaller in size) than \psl and
we believe that they capture the diagrams more directly. 
Appendix \ref{section:more-examples} gives several more examples which 
could not be included due to lack of space.

\paragraph{Example (Ordered Stack)}
Let us now consider the timing diagram in Fig.~\ref{fig:eg1} adapted from \cite{CF05}.
Rise and fall of successive signals follow a stack discipline.
\begin{figure}
\centering
\includegraphics[width=.75\textwidth, keepaspectratio]{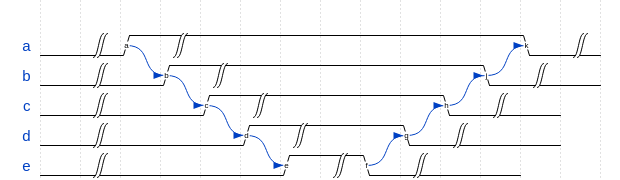}
\caption{Example 1.}
\label{fig:eg1}
\end{figure}
\noindent The language described by it is given by the \secenl formula: 
\[
\begin{small}
\begin{array}{l}
([\neg a]\ \textrm{\textasciicircum}\textless ua\textgreater\ \textrm{\textasciicircum}\ [a]\ \textrm{\textasciicircum}\ \textless va\textgreater\ \textrm{\textasciicircum}\ [\neg a])\ \wedge\ 
([\neg b]\ \textrm{\textasciicircum}\textless ub\textgreater\ \textrm{\textasciicircum}\ [b]\ \textrm{\textasciicircum}\ \textless vb\textgreater\ \textrm{\textasciicircum}\ [\neg b])\ \wedge\\ 
([\neg c]\ \textrm{\textasciicircum}\textless uc\textgreater\ \textrm{\textasciicircum}\ [c]\ \textrm{\textasciicircum}\ \textless vc\textgreater\ \textrm{\textasciicircum}\ [\neg c])\ \wedge\ 
([\neg d]\ \textrm{\textasciicircum}\textless ud\textgreater\ \textrm{\textasciicircum}\ [d]\ \textrm{\textasciicircum}\ \textless vd\textgreater\ \textrm{\textasciicircum}\ [\neg d])\ \wedge\\ 
([\neg e]\ \textrm{\textasciicircum}\textless ue\textgreater\ \textrm{\textasciicircum}\ [e]\ \textrm{\textasciicircum}\ \textless ve\textgreater\ \textrm{\textasciicircum}\ [\neg e])\ \wedge\  
(\ext\ \textrm{\textasciicircum}\ \textless ua\textgreater\ \textrm{\textasciicircum}\ \ext\ 
\textrm{\textasciicircum}\ \textless ub\textgreater\ \textrm{\textasciicircum}\ \ext)\ \And\\
(\ext\ \textrm{\textasciicircum}\ \textless ub\textgreater\ \textrm{\textasciicircum}\ \ext\ 
\textrm{\textasciicircum}\ \textless uc\textgreater\ \textrm{\textasciicircum}\ \ext)\ \And\
(\ext\ \textrm{\textasciicircum}\ \textless uc\textgreater\ \textrm{\textasciicircum}\ \ext\ 
\textrm{\textasciicircum}\ \textless ud\textgreater\ \textrm{\textasciicircum}\ \ext)\ \And\\
(\ext\ \textrm{\textasciicircum}\ \textless ud\textgreater\ \textrm{\textasciicircum}\ \ext\ 
\textrm{\textasciicircum}\ \textless ue\textgreater\ \textrm{\textasciicircum}\ \ext)\ \And\
(\ext\ \textrm{\textasciicircum}\ \textless va\textgreater\ \textrm{\textasciicircum}\ \ext\ 
\textrm{\textasciicircum}\ \textless vb\textgreater\ \textrm{\textasciicircum}\ \ext)\ \And\\
(\ext\ \textrm{\textasciicircum}\ \textless vb\textgreater\ \textrm{\textasciicircum}\ \ext\ 
\textrm{\textasciicircum}\ \textless vc\textgreater\ \textrm{\textasciicircum}\ \ext)\ \And\
(\ext\ \textrm{\textasciicircum}\ \textless vc\textgreater\ \textrm{\textasciicircum}\ \ext\ 
\textrm{\textasciicircum}\ \textless vd\textgreater\ \textrm{\textasciicircum}\ \ext)\ \And\\
(\ext\ \textrm{\textasciicircum}\ \textless vd\textgreater\ \textrm{\textasciicircum}\ \ext\ 
\textrm{\textasciicircum}\ \textless ve\textgreater\ \textrm{\textasciicircum}\ \ext). 
\end{array}
\end{small}
\] 
Note that first five conjuncts exactly correspond to the five waveforms. The last constraint enforces the ordering constraints
between waveforms. In general, if $n$ signals are stacked, its \sececntnl specification has size \bigo{n}. 

An equivalent \mtl (or \ltl) formula is given by:
\begin{small}
\[
\begin{array}{l}
[\neg a\ \And\ \neg b\ \And\ \neg c\ \And\ \neg d\ \And\ \neg e]\ \UU\ [a\ \And\ \neg b\ \And\ \neg c\ \And\ \neg d\ \And\ \neg e]\ \UU 
\\ \hspace*{0.5cm} 
[a\ \And\ b\ \And\ \neg c\ \And\ \neg d\ \And\ \neg e]\ \UU\ [a\ \And\ b\ \And\ c\ \And\ \neg d\ \And\ \neg e]\ \UU
\\ \hspace*{0.5cm} 
[a\ \And\ b\ \And\ c\ \And\ d\ \And\ \neg e]\ \UU\ [a\ \And\ b\ \And\ c\ \And\ d\ \And\ e]\ \UU
\\ \hspace*{0.5cm} 
[a\ \And\ b\ \And\ c\ \And\ d\ \And\ \neg e]\ \UU\ [a\ \And\ b\ \And\ c\ \And\ \neg d\ \And\ \neg e]\ \UU
\\ \hspace*{0.5cm} 
[a\ \And\ b\ \And\ \neg c\ \And\ \neg d\ \And\ \neg e]\ \UU\ [a\ \And\ \neg b\ \And\ \neg c\ \And\ \neg d\ \And\ \neg e]\ \UU
\\ \hspace*{0.5cm} 
[\neg a\ \And\ \neg b\ \And\ \neg c\ \And\ \neg d\ \And\ \neg e]
\end{array}
\]
\end{small} 
where $a\ \UU\ b$ is the derived modality $a\ \And\ \X(a\U b)$. For a stack of $n$ signals, the size of the \mtl formula is
$\bigo{\ensuremath{n^2}}$.

Above formula is also a \psl formula. We attempt to specify the pattern as a \psl regular expression as follows:\\
\begin{small}
\[
\begin{array}{l}
((\neg a\ \And\ \neg b\ \And\ \neg c\ \And\ \neg d\ \And\ \neg e; )[+]; ~
  (a\ \And\ \neg b\ \And\ \neg c\ \And\ \neg d\ \And\ \neg e; )[+]; 
  \\ \hspace*{0.5cm}
  (a\ \And\ b\ \And\ \neg c\ \And\ \neg d\ \And\ \neg e; )[+]; ~
  (a\ \And\ b\ \And\ c\ \And\ \neg d\ \And\ \neg e; )[+]; 
  \\ \hspace*{0.5cm}
  (a\ \And\ b\ \And\ c\ \And\ d\ \And\ \neg e; )[+]; ~
  (a\ \And\ b\ \And\ c\ \And\ d\ \And\ e; )[+];
   \\ \hspace*{0.5cm}
  (a\ \And\ b\ \And\ c\ \And\ d\ \And\ \neg e; )[+];~
  (a\ \And\ b\ \And\ c\ \And\ \neg d\ \And\ \neg e; )[+];
   \\ \hspace*{0.5cm}
  (a\ \And\ b\ \And\ \neg c\ \And\ \neg d\ \And\ \neg e; )[+];~
  (a\ \And\ \neg b\ \And\ \neg c\ \And\ \neg d\ \And\ \neg e; )[+];
   \\ \hspace*{0.5cm}
  (\neg a\ \And\ \neg b\ \And\ \neg c\ \And\ \neg d\ \And\ \neg e; )[+]
\end{array}
\]
\end{small}
For a stack of $n$ signals, the size of the \psl SERE expression is $\bigo{\ensuremath{n^2}}$. We believe that there is no formula of size
$\bigo{n}$ in \psl which can express the above property. Compare this with size $\bigo{n}$ formula of \sececntnl.

\paragraph{Example (Unordered Stack)}
In ordered stack signal $a$ turns on first and turns off last followed by signals $b,c,d,e$ in that order.
We consider a variation of the ordered stack example above where signals turn on and off in 
first-on-last-off order but there is no restriction on which signal becomes high first.
This can be compactly specified in \sececntnl as follows.
\[
\begin{small}
\begin{array}{l}
([\neg a]\ \textrm{\textasciicircum}\textless ua\textgreater\ \textrm{\textasciicircum}\ [a]\ \textrm{\textasciicircum}\ \textless va\textgreater\ \textrm{\textasciicircum}\ [\neg a])\ \wedge\ 
([\neg b]\ \textrm{\textasciicircum}\textless ub\textgreater\ \textrm{\textasciicircum}\ [b]\ \textrm{\textasciicircum}\ \textless vb\textgreater\ \textrm{\textasciicircum}\ [\neg b])\ \wedge\\ 
([\neg c]\ \textrm{\textasciicircum}\textless uc\textgreater\ \textrm{\textasciicircum}\ [c]\ \textrm{\textasciicircum}\ \textless vc\textgreater\ \textrm{\textasciicircum}\ [\neg c])\ \wedge\ 
([\neg d]\ \textrm{\textasciicircum}\textless ud\textgreater\ \textrm{\textasciicircum}\ [d]\ \textrm{\textasciicircum}\ \textless vd\textgreater\ \textrm{\textasciicircum}\ [\neg d])\ \wedge\\ 
([\neg e]\ \textrm{\textasciicircum}\textless ue\textgreater\ \textrm{\textasciicircum}\ [e]\ \textrm{\textasciicircum}\ \textless ve\textgreater\ \textrm{\textasciicircum}\ [\neg e])\ \wedge\ 
(\ext\ \textrm{\textasciicircum}\ \textless u1\textgreater\ \textrm{\textasciicircum}\ \ext\ 
\textrm{\textasciicircum}\ \textless u2\textgreater\ \textrm{\textasciicircum}\ \ext)\ \And\\ 
(\ext\ \textrm{\textasciicircum}\ \textless u2\textgreater\ \textrm{\textasciicircum}\ \ext\ 
\textrm{\textasciicircum}\ \textless u3\textgreater\ \textrm{\textasciicircum}\ \ext)\ \And\  
(\ext\ \textrm{\textasciicircum}\ \textless u3\textgreater\ \textrm{\textasciicircum}\ \ext\ 
\textrm{\textasciicircum}\ \textless u4\textgreater\ \textrm{\textasciicircum}\ \ext)\ \And\\
(\ext\ \textrm{\textasciicircum}\ \textless u4\textgreater\ \textrm{\textasciicircum}\ \ext\ 
\textrm{\textasciicircum}\ \textless u5\textgreater\ \textrm{\textasciicircum}\ \ext)\ \And\  
(\ext\ \textrm{\textasciicircum}\ \textless v5\textgreater\ \textrm{\textasciicircum}\ \ext\ 
\textrm{\textasciicircum}\ \textless v4\textgreater\ \textrm{\textasciicircum}\ \ext)\ \And\\ 
(\ext\ \textrm{\textasciicircum}\ \textless v4\textgreater\ \textrm{\textasciicircum}\ \ext\ 
\textrm{\textasciicircum}\ \textless v3\textgreater\ \textrm{\textasciicircum}\ \ext)\ \And\  
(\ext\ \textrm{\textasciicircum}\ \textless v3\textgreater\ \textrm{\textasciicircum}\ \ext\ 
\textrm{\textasciicircum}\ \textless v2\textgreater\ \textrm{\textasciicircum}\ \ext)\ \And\\
(\ext\ \textrm{\textasciicircum}\ \textless v2\textgreater\ \textrm{\textasciicircum}\ \ext\ 
\textrm{\textasciicircum}\ \textless v1\textgreater\ \textrm{\textasciicircum}\ \ext)\ \And\\ 
Bijection(ua,ub,uc,ud,ue,va,vb,vc,vd,ve,u1,u2,u3,u4,u5,v1,v2,v3,v4,v5) 
\end{array}
\end{small}
\] 
where formula $Bijection$ below states that there is one to one correspondence between positions marked by $ua,ub,uc,ud,ue,va,vb,vc,vd,de$
and positions marked by $u1,u2,u3,u4,u5,v1,v2,v3,v4,v5$. Moreover, it states that if $u_a$ maps to say $u_3$ than $v_a$ must map to $v_3$ and so on. 
\[
 \begin{array}{l}
 [[ (u1 \vee u2 \vee u3 \vee u4 \vee u5) \Leftrightarrow (ua \vee ub \vee uc \vee ud \vee ue)]] ~\land~
[[ \bigwedge_{1 \leq i,j \leq 5, i \not=j}  \neg (u_i \land u_j) ]] 
\\ \mbox{}
[[(v1 \vee v2 \vee v3 \vee v4 \vee v5) \Leftrightarrow (va \vee vb \vee vc \vee vd \vee ve)]]  ~\land~
[[ \bigwedge_{1 \leq i,j \leq 5, i \not=j}  \neg (v_i \& v_j) ]] 
\\ \mbox{}
\bigwedge_{1 \leq i \leq 5, j \in a,b,c,d,e}~ (true\ \textrm{\textasciicircum}\ \textless u_i \land u_j \textgreater\ \textrm{\textasciicircum}\ true \Leftrightarrow\ true\ \textrm{\textasciicircum}\ \textless v_i \land v_j \textgreater\ \textrm{\textasciicircum}\ true)
\end{array}
\]
Note that, in general, if $n$ signals are stacked,  then the above \sececntnl specification has size $\bigo{n^2}$.

Now we discuss encoding of unordered stack in \psl. In absence of nominals, it is difficult to state the above behaviour succinctly in logics \psl even using its SERE regular expressions.  Each order of occurrence of signals has to be enumerated as a disjunction where each
disjunct is as in the example ordered stack (where the order was $a, b, c, d, e$).
As there are $n!$ orders possible between $n$ signals, the size of the  \psl formula is also \bigo{n!}. We believe that there is no
polynomially sized formula in \psl encoding this property.
This shows that \sececntnl is exponentially more succint as compared to \psl.

In general, presence of nominals distinguishes \secenl from logics like \psl. In formalizing behaviour of hardware circuits
it has been proposed that regular expressions are not enough and operators such as pipelining have been introduced
\cite{CF05}. These are a form of synchronization and they can be easily expressed using nominals too. 
%

\section{Case study: Minepump Specification}
\label{section:casestudy}

We first specify some useful generic timing diagram properties which would 
used for requirement specification in this (and many other) case studies. 
\begin{itemize}
\item $\lags(P,Q,n)$: it is defined by Fig.~\ref{fig:lags2}. 
It specifies that in any observation interval 
if $P$ holds continuously for $n+1$ cycles and persists 
then $Q$ holds from $(n+1)^{th}$ cycle onwards 
and persists till $P$ persists. 
\item $\tracks(P,Q,n)$: defined Fig.~\ref{fig:tracks}. 
In any observation interval if $P$ 
becomes true then $Q$ sustains as long as $P$ sustains 
or upto $n$ cycles whichever is shorter. 
\item $\sep(P,n)$: Fig.~\ref{fig:separation} defines this property. 
Any interval which begins with a falling edge of $P$ and 
ends with a rising edge of $P$ then the length 
of the interval should be at least $n$ cycles.
\item $\ubound(P,n)$: Fig.~\ref{fig:ubound} defines the property.  
In any observation interval $P$ can be continuously true 
for at most $n$ cycles.
\end{itemize}
Note that we have presented these formulae diagrammatically. The textual version
of these live timing diagrams can be found in Appendix \ref{section:text-minepump}.
\begin{figure}[!h]
\begin{minipage}{.49\textwidth}
\centering
\includegraphics[height=21mm, keepaspectratio]{lags}
\caption{$\lags(P,Q,n)$.}
\label{fig:lags2}
\end{minipage}
\begin{minipage}{.49\textwidth}
\centering
\includegraphics[height=21mm,keepaspectratio]{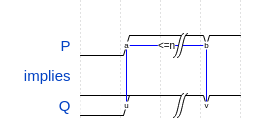}
\caption{$\tracks(P,Q,n)$.}
\label{fig:tracks}
\end{minipage}
\begin{minipage}{.49\textwidth}
\centering
\includegraphics[height=15mm, keepaspectratio]{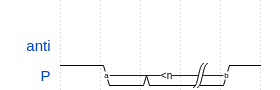}
\caption{$\sep(P,n)$.}
\label{fig:separation}
\end{minipage}
\begin{minipage}{.49\textwidth}
\centering
\includegraphics[height=15mm, keepaspectratio]{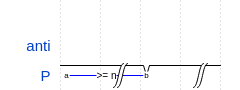}
\caption{$\ubound(P,n)$.}
\label{fig:ubound}
\end{minipage}
\end{figure}

We now state the minepump problem.
Imagine a minepump which keeps the water level in a mine 
under control for the safety of miners. 
The pump is driven by a controller which can switch it \emph{on} and \emph{off}. 
Mines are prone to methane leakage trapped underground 
which is highly flammable. So as a safety measure 
if a methane leakage is detected the controller is not allowed to 
switch on the pump under no circumstances. 

The controller has two input sensors - 
HH2O which becomes 1 when water level is high, 
and HCH4 which is 1 when there is a methane leakage; 
and can generate two output signals - 
ALARM which is set to 1 to sound/persist the alarm, 
and PUMPON which is set to 1 to switch on the pump. 
The objective of the controller is to \emph{safely} operate 
the pump and the alarm in such a way that 
the water level is never dangerous, 
indicated by the indicator variable DH2O, 
whenever certain assumptions hold. 
We have the following assumptions on the mine and the pump.
\begin{itemize}
\item[-] Sensor reliability assumption: $\pref{[[DH2O\implies HH2O]]}$. 
If HH2O is false then so is DH2O. 
\item[-] Water seepage assumptions: $\tracks(HH2O, DH2O, \kappa_1)$. 
The minimum no.~of cycles for water level 
to become dangerous once it becomes high is $\kappa_1$. 
\item[-] Pump capacity assumption: $\lags(PUMPON, \neg HH2O, \kappa_2)$. 
If pump is switched on for at least $\kappa_2+1$ cycles 
then water level will not be high after $\kappa_2$ cycles. 
\item[-] Methane release assumptions: $\sep(HCH4,\kappa_3)$ and $\ubound(HCH4,\kappa_4)$. 
The minimum separation between the two leaks of methane is $\kappa_3$ cycles 
and the methane leak cannot persist for more than $\kappa_4$ cycles. 
\item[-] Initial condition assumption: 
$\textbf{init}(\textless \neg HH2O\textgreater\wedge\textless\neg HCH4\textgreater, slen=0)$. 
Initially neither the water level is high nor there is a methane leakage. 
\end{itemize}
Let the conjunction of these \secenl formulas be denoted as $MINEASSUME$.

The commitments are:
\begin{itemize}
\item[-] Alarm control: 
$\lags(HH2O, ALARM, \kappa_5)$ and $\lags(HCH4, ALARM, \kappa_6)$ and 
$\lags(\neg HH2O\ \wedge\ \neg HCH4, \neg ALARM, \kappa_7)$. 
If the water level is dangerous 
then alarm will be high after $\kappa_5$ cycles and 
if there is a methane leakage then alarm will be high after $\kappa_6$ cycles. 
If neither the water level is dangerous nor there is a methane leakage 
then alarm should be off after $\kappa_7$  cycle. 
\item[-] Safety condition: $\pref{ [[ \neg DH2O\ \wedge\ (HCH4 $\implies$ \neg PUMPON)]] }$.
The water level should never become dangerous and 
whenever there is a methane leakage pump should be off. 
\end{itemize}
Let the conjunction of these commitments be denoted as $MINECOMMIT$. Then the requirement over
the minepump controller is given by the formula $MINEASSUME \Rightarrow MINECOMMIT$.
A textual version of this full minepump specification, which can be input to our tools is
given in Appendix \ref{section:text-minepump}. Note that the require consists of a mixture of 
timing diagram constraints (such as pump capacity assumption above) as well as \secenl formulas
(such as Safety condition above). 

We can automatically synthesize a controller  for the values say 
$\kappa_1=10$, $\kappa_2=2$, $\kappa_3=14$, $\kappa_4=2$, and 
$\kappa_5=\kappa_6=\kappa_7=1$.
The tool outputs a SCADE/SMV controller meeting the specification. 
A snapshot of SCADE code for the controller synthesized by 
DCSynthG for minepump can be found in Appendix~\ref{appendix:scade}. 
If the specification is not realizable we output an explanation. 

A second case study of synchronous bus arbiter specification can be found in Appendix.~\ref{section:arbiter}.
We can automatically synthesize a property monitor for such requirement and use it to model check a given
arbiter design; or we can directly synthesize a controller meeting the requirement. The appendix gives results of 
both these experiments.

\bibliographystyle{alpha}
\bibliography{rmmRef}

\newpage
\appendix

\section{Examples of Comparision with other logics}
\label{section:more-examples}

\paragraph{Example 1 (Ordering with timing)} 
Consider the timing diagram in Fig.~\ref{fig:eg}
which says that $a$ holds invariantly in the interval $[0,i]$ where $i \geq 1$, 
$b$ holds invariantly in the interval $[i,j]$, $j \geq i+1$, and $c$ holds at $j$ and $j \leq n$. 
\begin{figure}
\centering
\includegraphics[width=.4\textwidth, keepaspectratio]{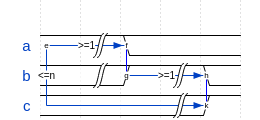}
\caption{Example 1.}
\label{fig:eg}
\end{figure}

\begin{itemize}
\item The language described by the above timing diagram is given by the \sececntnl formula  
$([a \And \neg  b]\ \textrm{\textasciicircum}\ [b \And \neg a \And \neg c]\ \textrm{\textasciicircum}\ 
\textless c\textgreater)\ \wedge\ (\textsf{slen} \leq n)$ which is of size $\bigo{\log(n)}$.
It is assumed that all timing constants such as $n$ are encoded in binary and hence they contribute size $\log(n)$.
 \item An equivalent \mtl formula is 
$\bigvee_{i=1}^{i=n-1}(a \And \neg b \U[i,i] (b \And \neg a \U[1,n-i]\ c))$ 
whose size is $\bigo{n\log(n)}$.
 \item Equivalent \ltl formula is 
$\bigvee_{i=1}^{i=n-1}\bigvee_{j=1}^{j=n-i}(a\ \U\X^i (b\ \U\X^j\ c))$ 
where $\X^k=\underbrace{\X\cdot\ldots\cdot\X}_{k\mbox{ times}}$, 
whose size is $\bigo{\ensuremath{n^2}}$.
\item Equivalent \psl formula is $(a \And \neg b[+]; ~b \And \neg a \And \neg c[+]; ~c) \And ((a|b)[<n];c)$ 
with size \bigo{\log(n)}.
\end{itemize}

We also give examples of complex dependancy constraints. Consider the timing diagram in Fig.~\ref{fig:eg2}. 
In this diagram, $ua$ occurs before $ub$ and $uc$, 
and $uc$ occurs before $ud$ and $ue$. 
The point $vc$ occurs after $vd$ and $ve$, 
and $va$ occurs after $vb$ and $vb$. 
\begin{figure}
\centering
\includegraphics[width=.75\textwidth, keepaspectratio]{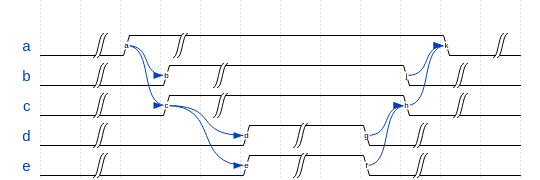}
\caption{Example 3.}
\label{fig:eg2}
\end{figure}

\noindent The behaviour is  described straightforwardly by the \sececntnl formula: 
\[
\begin{small}
\begin{array}{l}
([\neg a]\ \textrm{\textasciicircum}\textless ua\textgreater\textrm{\textasciicircum}\ [a]\ \textrm{\textasciicircum}\textless va\textgreater\textrm{\textasciicircum}\ [\neg a])\ \wedge\ ([\neg b]\ \textrm{\textasciicircum}\textless ub\textgreater\ \textrm{\textasciicircum}\ [b]\ \textrm{\textasciicircum} 
\textless vb\textgreater\textrm{\textasciicircum}[\neg b])\ \wedge\\
([\neg c]\ \textrm{\textasciicircum}\ \textless uc\textgreater\ \textrm{\textasciicircum}\ [c]\ \textrm{\textasciicircum}\ \textless vc\textgreater\ \textrm{\textasciicircum}\ [\neg c])\ \wedge\ ([\neg d]\ \textrm{\textasciicircum}\ \textless ud\textgreater\ \textrm{\textasciicircum}\ [d]\ \textrm{\textasciicircum}\textless vd\textgreater\ \textrm{\textasciicircum}\ [\neg d])\ \wedge\\
([\neg e]\ \textrm{\textasciicircum}\ \textless ue\textgreater\ \textrm{\textasciicircum}\ [e]\ \textrm{\textasciicircum}\ \textless ve\textgreater\ \textrm{\textasciicircum}\ [\neg e])\ \wedge\
(\ext\ \textrm{\textasciicircum}\ \textless ua\textgreater\ \textrm{\textasciicircum}\ ext\ \textrm{\textasciicircum}\ 
\textless ub\textgreater\ \textrm{\textasciicircum}\ true)\ \And\\ 
(\ext\ \textrm{\textasciicircum}\ \textless ua\textgreater\ \textrm{\textasciicircum}\ ext\ \textrm{\textasciicircum}\ 
\textless uc\textgreater\ \textrm{\textasciicircum}\ true)\ \And\ 
(\ext\ \textrm{\textasciicircum}\ \textless uc\textgreater\ \textrm{\textasciicircum}\ ext\ \textrm{\textasciicircum}\ 
\textless ud\textgreater\ \textrm{\textasciicircum}\ true)\ \And\\ 
(\ext\ \textrm{\textasciicircum}\ \textless uc\textgreater\ \textrm{\textasciicircum}\ ext\ \textrm{\textasciicircum}\ 
\textless ue\textgreater\ \textrm{\textasciicircum}\ true)\ \And\ 
(\ext\ \textrm{\textasciicircum}\ \textless ve\textgreater\ \textrm{\textasciicircum}\ ext\ \textrm{\textasciicircum}\ 
\textless vc\textgreater\ \textrm{\textasciicircum}\ true)\ \And\\ 
(\ext\ \textrm{\textasciicircum}\ \textless vd\textgreater\ \textrm{\textasciicircum}\ ext\ \textrm{\textasciicircum}\ 
\textless vc\textgreater\ \textrm{\textasciicircum}\ true)\ \And\ 
(\ext\ \textrm{\textasciicircum}\ \textless vc\textgreater\ \textrm{\textasciicircum}\ ext\ \textrm{\textasciicircum}\ 
\textless va\textgreater\ \textrm{\textasciicircum}\ true)\ \And\\ 
(\ext\ \textrm{\textasciicircum}\ \textless vb\textgreater\ \textrm{\textasciicircum}\ ext\ \textrm{\textasciicircum}\ 
\textless va\textgreater\ \textrm{\textasciicircum}\ true). 
\end{array}
\end{small}
\]
This formula is linear in the size of the timing diagram. Unfortunately, specifying these dependancies in \psl is complex and
formula size blows up at least quadratically.

\section{Implementation}
\label{section:xml}
We propose a textual framework 
with a well defined syntax and semantics 
for requirement specification 
(of the form assumptions \implies\ commitments). 
Our framework is \emph{heterogeneous} in the sense that 
it supports both \secenl formulas and 
timing diagrams with nominals for system specification. 
It can also handle all of our limited liveness operators. 
(see Appendix.~\ref{section:text-minepump} for the code 
for minepump in our framework). 

We have also developed a \emph{Python based translator} which takes 
requirements in our textual format as input 
and produces property monitors as well as controllers as output. 
Fig.~\ref{fig:tool-chain} gives a 
broad picture of the current status of our tool chain.

\begin{figure}[!h]
\begin{center}
\begin{tikzpicture}[scale=0.28, every node/.style={scale=0.75}]
\draw (1,2) node {requirement} (1,1) node {specification:} (1,0) node {timing diagrams} (1,-1) node {+\secenl} (1,-2) node {+liveness};
\draw[->] (4.5,0) -- (8,0);
\draw (8,-2) rectangle (12,2);
\draw (10,.5) node {Python} (10,-0.5) node {translator};
\draw[->] (12,0) -- (16,0) node[midway,above] {QDDC};
\draw (16,-3.5) rectangle (25,3.5);
\draw (20,-1.5) rectangle (25,1.5);
\draw (22.5,0) node {DCVALID} (18.5,2.5) node {DCObs} (18.5,-2.5) node {DCSynthG};
\draw (16,0) -- (20,0);
\draw[->] (37,3) -- (37,2);
\draw (37,3.5) node {system};
\draw (25,1) -- (34,1) node[midway,above] {property monitors};
\draw[->] (34,1) -- (35,1);
\draw (35,0) rectangle (39,2) (37,1.5) node {Model} (37,0.5) node {checker};
\draw (25,-1) -- (34,-1) node[midway,below] {synthesized controller};
\draw[->] (34,-1) -- (35,-1);
\draw[dashed] (-3,-4.5) rectangle (34,4.5); 
\end{tikzpicture}
\end{center}
\caption{Our tool chain.}
\label{fig:tool-chain}
\end{figure}
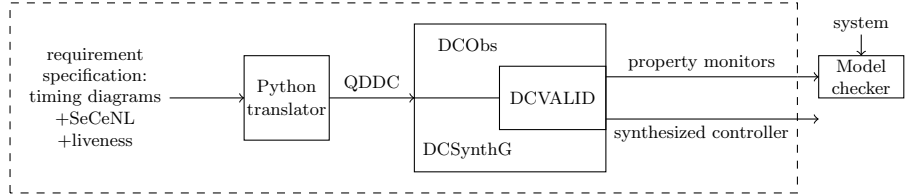

\section{Minepump Code}
\label{section:text-minepump} 
The example code for minepump is written using textual syntax for
QDDC which can be found in \cite{Pan00,Pan01}.

\noindent
\#lhrs "minepump"\\
interface\\
\{

input HH2O, HCH4; 

output ALARM monitor x, PUMPON monitor x;

constant delta = 1, w = 10, epsilon=2 , zeta=14, kappa=2;

auxvar DH2O;

softreq (!YHCH4)$||$(!PUMPON);\\
\}\\
\#implies lag(P, Q, n)\\
\{

td lagspeclet1(P, n)

\{

P:\textless u\textgreater1$|$\textless v\textgreater1$|$;

@sync:(u, v, n);

\}

td lagspeclet2(Q)

\{

Q: 2$|$\textless v\textgreater1$|$;

\}\\
\}\\
\#implies tracks(P, Q, n)\\
\{

td tracksspeclet1(P, n)

\{

P: 0\textless u\textgreater 1$|$\textless v\textgreater 1$|$;

@sync: (u,v,[n,));

\}

td tracksspeclet2(Q)

Q: 2\textless u\textgreater 1$|$\textless v\textgreater 0$|$;\\
\}\\
\#implies tracks2(P, Q, n)\\
{

td tracks2speclet1(P, n)

\{

P: 0\textless u\textgreater 1$|$\textless v\textgreater 0$|$;

@sync: (u,v,[,n]);

\}

td tracks2speclet2(Q)

Q:2\textless u\textgreater 0$|$\textless v\textgreater 2$|$;\\
\}\\
\#implies sep(P, n)\\
\{

td sepspeclet1(P)

P: 1\textless u\textgreater 0$|$\textless v\textgreater 1;

td sepspeclet2(n)

\{

@null: 2\textless u\textgreater 2$|$\textless v\textgreater 2;

@sync: (u, v, (n,]);

\}\\
\}\\
\#implies ubound(P, n)\\
\{

td boundspeclet1(P)

P: \textless c\textgreater 1$|$\textless d\textgreater 1;

td boundspeclet2(n)

\{

@null: \textless c\textgreater 2$|$\textless d\textgreater 2;

@sync: (c, d, [,n));

\}\\
\}\\
dc safe(DH2O)
\{

pt $||$ [!DH2O \&\& ((HCH4$||$!HH2O) =\textgreater  !PUMPON)];\\
\}\\
main()\\
\{

assume (\textless !HH2O\textgreater\ \textrm{\textasciicircum}\ true);

assume (pt $||$ [DH2O =\textgreater  HH2O]);

assume tracks(HH2O, !DH2O, w);

assume tracks2	(HH2O, DH2O, w);

assume lag(PUMPON, !HH2O, epsilon);

assume sep(HCH4, zeta);

assume ubound(HCH4, kappa);

req (\textless !ALARM\textgreater\ \textrm{\textasciicircum}\ true);

req lag(HH2O, ALARM, delta);

req lag(HCH4, ALARM, delta);

req lag(!HCH4 \&\& !HH2O, !ALARM, delta);

req safe(DH2O);\\
\}

\pagebreak

\section{Synthesized controller for minepump}
\label{appendix:scade}
A snapshot of a controller synthesized from the minepump requirement in 
\S\ref{section:casestudy}. The controller had approximately 140 states 
and it took less than a second for synthesis.

\begin{figure}[!h]
\framebox{\parbox[t][112mm]{\textwidth}{
$\begin{array}{l}

\mathrm{\textsf{node minepump ( HH2O, HCH4:bool) returns ( ALARM, PUMPON:bool)}}\\
\mathrm{\textsf{var cstate: int; }}\\
\mathrm{\textsf{let }} \\
 \mathrm{\textsf{\qquad ALARM, PUMPON, cstate = }}\\
 \mathrm{\textsf{\qquad ( if true  and not HH2O and not HCH4 then ( false,  false,   2) }}\\
  \mathrm{\textsf{\qquad else if true  and not HH2O and HCH4 then ( false,  false,   4) }}\\
  \mathrm{\textsf{\qquad else if true  and HH2O and not HCH4 then ( false,  false,   4) }}\\
  \mathrm{\textsf{\qquad else if true  and HH2O and HCH4 then ( false,  false,   4) }}\\
 \mathrm{\textsf{\qquad else ( dontCare,  dontCare,  1)) \implies }}\\
 \mathrm{\textsf{\qquad if pre cstate = 1 and not HH2O and not HCH4 then ( false,  false, 2) }}\\
 \mathrm{\textsf{\qquad else if pre cstate = 1 and not HH2O and HCH4 then ( false,  false, 4) }}\\
 \mathrm{\textsf{\qquad else if pre cstate = 1 and HH2O and not HCH4 then ( false,  false, 4) }}\\
 \mathrm{\textsf{\qquad else if pre cstate = 1 and HH2O and HCH4 then ( false,  false, 4) }}\\
 \mathrm{\textsf{\qquad else if pre cstate = 2 and not HH2O and not HCH4 then ( false,  false, 2) }}\\
 \mathrm{\textsf{\qquad else if pre cstate = 2 and not HH2O and HCH4 then ( false,  false, 7) }}\\
 \mathrm{\textsf{\qquad else if pre cstate = 2 and HH2O and not HCH4 then ( false,  false, 9) }}\\
 \mathrm{\textsf{\qquad else if pre cstate = 2 and HH2O and HCH4 then ( false,  false, 11) }}\\
 \mathrm{\textsf{\qquad else if pre cstate = 4 and not HH2O and not HCH4 then ( false,  false, 4) }}\\
 \mathrm{\textsf{\qquad else if pre cstate = 4 and not HH2O and HCH4 then ( false,  false, 4) }}\\
 \mathrm{\textsf{\qquad ..............................}}\\
 \mathrm{\textsf{\qquad ..............................}}\\
 \mathrm{\textsf{\qquad else if pre cstate = 309 and HH2O and not HCH4 then ( false,  true, 4) }}\\
 \mathrm{\textsf{\qquad else if pre cstate = 309 and HH2O and HCH4 then ( false,  true, 4) }}\\
 \mathrm{\textsf{\qquad else if pre cstate = 372 and not HH2O and not HCH4 then ( false,  false, 2) }}\\
 \mathrm{\textsf{\qquad else if pre cstate = 372 and not HH2O and HCH4 then ( false,  false, 7) }}\\
 \mathrm{\textsf{\qquad else if pre cstate = 372 and HH2O and not HCH4 then ( false,  true, 4) }}\\
 \mathrm{\textsf{\qquad else if pre cstate = 372 and HH2O and HCH4 then ( false,  true, 4) }}\\
 \mathrm{\textsf{\qquad else ( dontCare, dontCare, pre cstate) ; }}\\

\mathrm{\textsf{tel }}\\

\end{array}$}}
\end{figure}

\section{Case study: 3-cell arbiter}
\label{section:arbiter}
In this section we illustrate another application 
of our specification format and associated tools. 
For this we use the standard McMillan arbiter circuit 
given in NuSMV examples and do the model checking against the specification below.

A \emph{synchronous 3-cell bus arbiter} has 3 request lines 
$req1, req2$ and $req3$, and 
corresponding acknowledgement lines $ack1, ack2$ and $ack3$. 
At any time instance a subset of request lines can be high and 
arbiter decides which request should be granted permission 
to access the bus by making corresponding acknowledgement line high.
The requirements for such a bus arbiter are as formulated below.
\begin{itemize}
\item[-] Exclusion: $\pref{[[(\bigwedge_{i\neq j}\neg (ack_i\And ack_j))]]}$. 
At most 1 acknowledgement can be given at a time.
\item[-] No spurious acknowledgement: $\pref{[[(\bigwedge_1(ack_i\implies req_i))]]}$. 
A request should be granted access 
to the bus only if it has requested it.
\item[-] Response time: $\textbf{implies}([[req]]\And slen=n, \true\verb|^|\textless ack\textgreater\verb|^|\true)$.
One of the most important property of an arbiter 
is that it any request should be granted within $n$ cycles, 
i.~e.~if a request is continuously true for sometime then it should be heard. 
\item [-] Deadtime: to specify this property we first specify 
lost cycle as follows: $Lost\equiv(\bigvee_i req_i)\And(\neg (\bigvee_i ack_i))$. 
Then $Deadtime\equiv\textbf{anti}([[Lost]]\And slen>n)$. 
This specifies the maximum number of consecutive cycles 
that can be lost by the arbiter is $n$.
\end{itemize}
The requirement $ARBREQ$ is a conjunction of above formulas.

We ran the requirement through our tool chain 
to generate NuSMV module for the requirement monitor.
This  module was then instantiated synchronously with 
McMillan arbiter implementation in NuSMV 
and NuSMV model checker was called in to check 
the property $G(assumptions\implies commitments)$.

\paragraph{Model checking}: 
Experimental results show that the deadtime for 3-cell McMillan arbiter is 3. 
If we specify the deadtime as 2 cycles 
then a counter example is generated by NuSMV as depicted in Fig.~\ref{fig:deadtime1}. 
This counter examples show that even though there is an request line high in 
$4^{th}$, $5^{th}$ and $6^{th}$ cycle, but no acknowledgment is given by arbiter.
Similarly, the response time for $1^{st}$ request is 3 cycles 
whereas for $2^{nd}$ and $3^{rd}$ cell it is 6 cycles. 
If we specify the response time of 2 and 5 cycles for $1^{st}$ 
and $2^{nd}$ then NuSMV generates counter examples in 
Fig.~\ref{fig:resptime2} and Fig.~\ref{fig:resptime5} respectively. 
Fig.~\ref{fig:resptime5} shows that the request line for cell 2 
(i.~e.~req2) is high continuously for 5 cycles 
starting from $3^{rd}$ without an acknowledgement from the arbiter. 
\begin{figure}[!h]
\begin{minipage}{.32\textwidth}
\includegraphics[height=2.5cm, keepaspectratio]{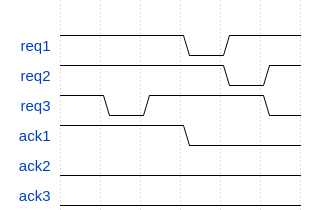}
\caption{Counter Example Showing deadtime exceeding 2 cycles}
\label{fig:deadtime1}
\end{minipage}
%
\begin{minipage}{.32\textwidth}
\includegraphics[height=2.5cm, keepaspectratio]{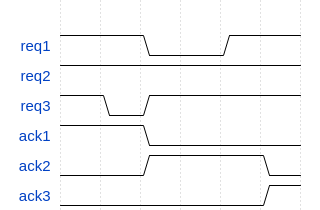}
\caption{Counter Example showing response time of 1st cell exceeding 2 cycles}
\label{fig:resptime2}
\end{minipage}
%
\begin{minipage}{.32\textwidth}
\includegraphics[height=2.5cm, keepaspectratio]{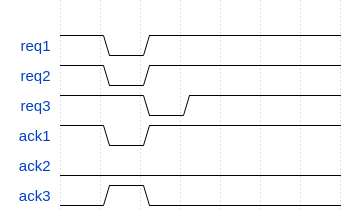}
\caption{Counter Example showing response time of 2nd cell exceeding 5 cycles}
\label{fig:resptime5}
\end{minipage}
\end{figure}

\paragraph{Controller synthesis}: 
We have also synthesized a controller for the arbiter specification using our tool DCSynthG.
We have tightened the requirements by specifying the response time as 3 cycles 
uniformly for all three cells and deadtime as 0 cycles, 
i.~e.~there is no lost cycle. 
The tool could synthesize a controller in 0.03 seconds with 17 states.

\end{document}